\newtheorem{fact}{Fact}
\newtheorem{observation}[theorem]{Observation}
\def\nottoobig#1{{\hbox{$\left#1\vcenter to1.111\ht\strutbox{}\right.\n@space$}}}
\newtheorem{theorem}{Theorem}[section]
\newtheorem{lemma}[theorem]{Lemma}
\newtheorem{claim}[theorem]{Claim}
\newtheorem{proposition}[theorem]{Proposition}
\newtheorem{definition}[theorem]{Definition}
\newcommand{\nat}{{\mathbb N}}
\newcommand{\ie}{$\mbox{i.e.}$}
\newlength{\filength}
\newsavebox{\gcbox}
\sbox{\gcbox}{\framebox[\filength]{\rule{0ex}{2ex}}}
\newcommand{\qedblob}{\mbox{\rule[-1.5pt]{5pt}{10.5pt}}}
\def\literalqed{{\ \nolinebreak\hfill\mbox{\qedblob\quad}}}
\def\qed{\literalqed}
\newcommand{\singlespacing}{\let\CS=
\@currsize\renewcommand{\baselinestretch}{1}\tiny\CS}
\newcommand{\singlespacingplus}{\let\CS=
\@currsize\renewcommand{\baselinestretch}{1.25}\tiny\CS}
\newcommand{\doublespacing}{\let\CS=
\@currsize\renewcommand{\baselinestretch}{1.75}\tiny\CS}
\newcommand{\draftspacing}{\let\CS=
\@currsize\renewcommand{\baselinestretch}{2.0}\tiny\CS}
\def\zo{\{0,1\}}
\def\mapping{\rightarrow}
\newcommand{\zon}{\zo^n}
\newcommand{\polylog}{{\rm polylog}}
\newcommand{\poly}{{\rm poly}}
\def\@listI{\leftmargin\leftmargini \parsep 4.5pt plus 1pt minus 1pt\topsep6pt plus 2pt minus 2pt \itemsep  2pt plus 2pt minus 1pt}
\let\@listi\@listI
\author{ {Marius Zimand\/}
\thanks{  Department of Computer and Information Sciences, Towson University,
Baltimore, MD.; email: mzimand@towson.edu; http://triton.towson.edu/\~{ }mzimand.
Part of this work has been supported
by NSF grant CCF 0634830.}}
\author{
{Marius Zimand}
\thanks{ {  \tt  http://triton.towson.edu/\~{ }mzimand}.}}
\institute{
{Department of Computer and Information Sciences, Towson University,
Baltimore, MD, USA}
}
\begin{document}
\title{Counting dependent and independent strings}

\date{}

\maketitle

\pagestyle{plain}
\pagenumbering{arabic}

\begin{abstract} 
We derive quantitative results regarding sets of $n$-bit strings that have different dependency or independency properties. Let $C(x)$ be the Kolmogorov complexity of the string $x$. A string $y$ has $\alpha$ dependency with a string $x$ if $C(y) - C(y \mid x) \geq \alpha$. A set of strings $\{x_1, \ldots, x_t\}$ is pairwise $\alpha$-independent if for all $i\not=j$, $C(x_i) - C(x_i \mid x_j) \leq \alpha$. A tuple of strings $(x_1, \ldots, x_t)$ is mutually $\alpha$-independent if $C(x_{\pi(1)} \ldots x_{\pi(t)}) \geq C(x_1) + \ldots + C(x_t) - \alpha$, for every permutation $\pi$ of $[t]$.  We show that:
\begin{itemize}
	\item For every $n$-bit string $x$ with complexity $C(x) \geq \alpha + 7 \log n$, the set of $n$-bit strings that have $\alpha$ dependency with $x$ has size at least $(1/\poly(n)) 2^{n-\alpha}$. In case $\alpha$ is computable from $n$ and $C(x) \geq \alpha + 12 \log n$, the size of same set is at least $(1/C)2^{n-\alpha} - \poly(n) 2^{\alpha}$, for some positive constant $C$.
	\item There exists a set of $n$-bit strings $A$ of size $\poly(n)2^{\alpha}$ such that any $n$-bit string has $\alpha$-dependency with some string in $A$.
	\item If the set of $n$-bit strings $\{x_1, \ldots, x_t\}$ is  pairwise $\alpha$-independent, then $t \leq \poly(n) 2^{\alpha}$. This bound is tight within a $\poly(n)$ factor, because, for every $n$, there exists a set of $n$-bit strings $\{x_1, \ldots, x_t\}$ that is pairwise $\alpha$-dependent with $t = (1/\poly(n))\cdot 2^{\alpha}$ (for all $\alpha \geq 5 \log n$).
	\item If the tuple of $n$-bit strings $(x_1, \ldots, x_t)$ is mutually $\alpha$-independent, then $t \leq \poly(n) 2^{\alpha}$ (for all $\alpha \geq 7 \log n + 6$).
	\end{itemize}
\end{abstract}

\section{Introduction}
A fact common to many mathematical settings is that in a sufficiently large set some relationship emerges among its elements. Generically, these are called Ramsey-type results. We list just a few examples: any $n+1$ vectors in an $n$-dimensional vector space must be dependent; for every $k$ and sufficiently large $n$, any subset of $[n]$ of constant density must have $k$ elements in arithmetic progression; any set of $5$ points in the plane must contain $4$ points that form a convex polygon. All these results show that in a sufficiently large set, some attribute of one element is determined by the other elements. 

We present in this paper a manifestation of this phenomenon in the very general framework of algorithmic information theory. We show that in a sufficiently large set some form of algorithmical dependency among its elements must exist. Informally speaking, $\poly(n)\cdot 2^{\alpha}$ binary strings of length $n$ must share at least $\alpha$ bits of information. For one interpretation of ``share", we also show that this bound is tight within a $\poly(n)$ factor.

Central to our investigation are the notions of information in a string and the derived notion of dependency between strings. The information in a string $x$ is captured by its Kolmogorov complexity $C(x)$.
 A string $y$ has $\alpha$-dependency with string $x$ if
$C(y) - C(y \mid x) \geq \alpha$. The expression $C(y) - C(y \mid x)$, denoted usually more concisely as $I(x : y)$, represents \emph{the quantity of information in $x$ about $y$} and is a key concept in information theory. It is known that $I(x:y) = I(y:x) \pm O(\log n)$ (Symmetry of Information Theorem~\cite{zvo-lev:j:kol}), where $n$ is the length of the longer between the strings $x$ and $y$, and therefore $I(x:y)$ is also called the mutual information of $x$ and $y$. For any $n$-bit string $x$ and positive integer $\alpha$, 
we are interested in estimating the size of the set $A_{x, \alpha}$ of $n$-bit strings $y$ such that $C(y) - C(y \mid x) \geq \alpha$.  One can see by a standard counting argument that $|A_{x, \alpha}| \leq 2^{n-\alpha + c}$ for some constant $c$. Regarding a lower bound for $|A_{x,\alpha}|$, it is easy to see that if $C(x) \preceq \alpha$, then $A_{x, \alpha}$ is empty (intuitively, in order for $x$ to have $\alpha$ bits of information about $y$, it needs to have $\alpha$ bits of information to start with, regardless of $y$). The lower bound that we establish holds for any string having Kolmogorov complexity $\succeq \alpha$.\footnote{We use notation $\poly(n)$ for $n^{O(1)}$ and $\approx$, $\preceq$ and $\succeq$ to denote that the respective equality or inequality holds with an error of at most $O(\log n)$.}  For such strings $x$, we show that $|A_{x, \alpha}| \geq (1/\poly(n)) 2^{n-\alpha}$. A related set is $B_{x, \alpha}$ consisting of the $n$-bit strings $y$ with the property $C(y \mid n) - C(y \mid x) \geq \alpha$. This is the set of $n$-bit strings about which $x$ has $\alpha$ bits of information besides the length. Note that $B_{x, \alpha} \subseteq A_{x, \alpha}$. The same observations regarding an upper bound for $|B_{x, \alpha}|$ and the emptiness of $B_{x, \alpha}$ in case $C(x) \preceq \alpha$ remain valid. For $x$ with $C(x) \succeq \alpha$ and $\alpha$ computable from $n$, we show the lower bound
$|B_{x, \alpha}| \geq (1/C)\cdot 2^{n-\alpha} - \poly(n) \cdot 2^{\alpha}$, for some positive constant $C$.

 We turn  to the Ramsey-type results announced above. A set of $n$-bit strings $\{x_1, \ldots, x_t\}$ is pairwise $\alpha$-independent if for all $i \not=j$, $C(x_i) - C(x_i \mid x_j) \leq \alpha$. Intuitively, this means that any two strings in the set have in common at most $\alpha$ bits of information. For the notion of mutual independence we propose the following definition (but other variants are conceivable). The tuple of $n$-bit strings $(x_1, \ldots, x_t) \in (\zo^n)^t$ is mutually $\alpha$-independent if $C(x_{\pi(1)} \ldots x_{\pi(t)}) \geq C(x_1) + \ldots + C(x_t) - \alpha$, for every permutation $\pi$ of $[t]$.  Intuitively this means that $x_1, \ldots, x_t$ share at most $\alpha$ bits of information. We show that if $\{x_1, \ldots, x_t\}$ is pairwise $\alpha$-independent or if $(x_1, \ldots, x_t)$ is mutually $\alpha$-independent then $t \leq \poly(n) 2^{\alpha}$. The bound in the pairwise independent case is tight within a polynomial factor.

We also show that there exists a set $B$ of size $\poly(n)2^{\alpha}$ that ``$\alpha$-covers" the entire set of $n$-bit strings, in the sense that for each $n$-bit string $y$ there exists a string $x$ in $B$ that has $\alpha$ bits of information about $y$ (\ie, $y$ is in $A_{x, \alpha}$).

The main technical novelty of this paper is the technique used to lower bound the size of $B_{x, \alpha} = \{y \in \zo^n \mid C(y \mid n) - C(y \mid x) \geq \alpha\}$, which should be contrasted with a known and simple approach. This ``normal'' and simple approach is best illustrated when $x$ is random. In this case, the prefix $x(1:\alpha)$ of $x$ of length $\alpha$ is also random and, therefore, if we take $z$ to be an $(n-\alpha)$ long string that is random conditioned by $x(1:\alpha)$, then $C(z x(1:\alpha)) = n - O(\log n)$, $C(zx(1:\alpha) \mid x(1:\alpha)) = n - \alpha - O(\log n)$, and thus, $zx(1:\alpha) \in B_{x, \alpha + O(\log n)}$. There are approximately $2^{n - \alpha}$ strings $z$ as above, and this leads to a lower bound of $2^{n-\alpha}$ for $|B_{x, \alpha + O(\log n)}|$, which implies a lower bound of $(1/\poly(n))2^{n-\alpha}$ for $|B_{x,\alpha}|$. This method is so basic and natural that it looks hard to beat. However, using properties of Kolmogorov complexity extractors, we derive a better lower bound for $|B_{x,\alpha}|$ that does not have the slack of $1/\poly(n)$, in case $\alpha$ is computable from $n$ (even if $\alpha$ is not computable from $n$, the new method gives a tighter estimation than the above ``normal'' method).  
A Kolmogorov complexity extractor is a function that starting with several strings that have Kolmogorov complexity relatively small compared to their lengths, computes a string that has Kolmogorov complexity almost close to its length. A related notion, namely multi-source randomness extractors, has been studied extensively in computational complexity (see\cite{bou:j:multiextract,bar-imp-wig:c:multisourceext,bkssw:c:multisourceextract,raz:c:multiextract,rao:c:multisourceextract}). Hitchcock, Pavan and Vinodchandran~\cite{hit-pav-vin:t:Kolmextraction} have shown that Kolmogorov complexity extractors are equivalent to a type of functions that are close to being multisource randomness extractors. Fortnow, Hitchcock, Pavan, Vinodchandran and Wang~\cite{fhpvw:c:extractKol} have constructed a polynomial-time Kolmogorov complexity extractor based on the multi-source randomness constractor of Barak, Impagliazzo and Wigderson~\cite{bar-imp-wig:c:multisourceext}. 
The author has constructed Kolmogorov complexity extractors for other settings, such as extracting from infinite binary sequences~\cite{zim:c:csr,zim:c:kolmlimindep} or from binary strings that have a bounded degree of dependence~\cite{zim:c:kolmlimindep,zim:c:genindepstringsCiE09,zim:c:impossibamplific}. The latter type of Kolmogorov complexity extractors is relevant for this paper. Here we modify slightly an extractor $E$ from~\cite{zim:c:impossibamplific}, which, on inputs two $n$-bit strings $x$ and $y$ that have Kolmogorov complexity at least $s$ and dependency at most $\alpha$, constructs an $m$-bit string $z$ with $m \approx s$ and Kolmogorov complexity equal to $m - \alpha - O(1)$ even conditioned by any one of the input strings. Let us 
call a pair of strings $x$ and $y$ with the above properties as \emph{good-for-extraction}. We fix $x \in \zo^n$ with $C(x) \geq s$. Let $z$ be the most popular image of the function $E$ restricted to $\{x\} \times \zo^n$.  Because it is distinguishable from all other strings, given $x$, $z$ can be described with only $O(1)$ bits (we only need a description of the function $E$ and of the input length). Choosing $m$ just slightly larger than $\alpha$ we arrange that $C(z \mid x) < m-\alpha -O(1)$ . This implies that all the preimages of $z$ under  $E$ restricted as above are
\emph{bad-for-extraction}.  Since the size of $E^{-1}(z) \cap (\{x\}\ \times \zo^n)$ is at least $2^{n-m}$, we see that at least $2^{n-m}$ pairs $(x,y)$ are bad-for-extraction. A pair of strings $(x,y)$ is bad-for-extraction if either $y$ has Kolmogorov complexity below $s$ (and it is easy to find an upper bound on the number of such strings), or if $y \in B_{x, \alpha}$. This allows us to find the lower bound  for the size of $B_{x,\alpha}$.
\section{Preliminaries}

We work over the binary alphabet $\{0,1\}$; $\nat$ is the set of natural numbers. A string $x$ is an element of $\{0,1\}^*$; $|x|$ denotes its length; $\zo^n$ denotes the set of strings of length $n$; $|A|$ denotes the cardinality of a finite set $A$; for $n \in \nat$, $[n]$ denotes the set $\{1,2, \ldots, n\}$. We recall the basics of (plain) Kolmogorov complexity (for an extensive coverage, the reader should consult one of the monographs by Calude~\cite{cal:b:infandrand}, Li and Vit\'{a}nyi~\cite{li-vit:b:kolmbook}, or Downey and Hirschfeldt~\cite{dow-hir:b:algrandom}; for a good and concise introduction, see Shen's lecture notes~\cite{she:t:kolmnotes}). Let $M$ be a standard Turing machine. For any string $x$, define the \emph{(plain) Kolmogorov complexity} of $x$ with respect to $M$, as 
\[C_M(x) = \min \{ |p| \mid M(p) = x \}.
\]
 There is a universal Turing machine $U$ such that for every machine $M$ there is a constant $c$ such that for all $x$,
\begin{equation}
\label{e:univ}
C_U(x) \leq C_M(x) + c.
\end{equation}
We fix such a universal machine $U$ and dropping the subscript, we let $C(x)$ denote the Kolmogorov complexity of $x$ with respect to $U$. We also use the  concept of conditional Kolmogorov complexity. Here the underlying machine is a Turing machine that in addition to the read/work tape which in the initial state contains the input $p$, has a second tape containing initially a string $y$, which is called the conditioning information. Given such a machine $M$, we define the Kolmogorov complexity of $x$ conditioned by $y$ with respect to $M$ as 
\[C_M(x \mid y) = \min \{ |p| \mid M(p, y) = x \}.
\]
Similarly to the above, there exist  universal machines of this type and they satisfy the relation similar to Equation~\ref{e:univ}, but for conditional complexity. We fix such a universal machine $U$, and dropping the subscript $U$, we let $C(x \mid y)$ denote the Kolmogorov complexity of $x$ conditioned by $y$ with respect to $U$. 

There exists a constant $c_U$ such that for all strings $x$, $C(x) \leq |x| + c_U$. Strings $x_1, x_2, \ldots, x_k$ can be encoded in a self-delimiting way (\ie, an encoding from which each string can be retrieved) using $|x_1| + |x_2| + \ldots + |x_k| + 2 \log |x_2| + \ldots + 2 \log |x_k| + O(k)$ bits. For example, $x_1$ and $x_2$ can be encoded as $\overline{(bin (|x_2|)} 01 x_1 x_2$, where $bin(n)$ is the binary encoding of the natural number $n$ and, for a string $u = u_1 \ldots u_m$, $\overline{u}$ is the string $u_1 u_1 \ldots u_m u_m$ (\ie, the string $u$ with its bits doubled).
\if01
For every sufficiently large $n$ and $k \leq n$,
\[
2^{k-2\log n} < |\{x \in \zo^n \mid C(x) \leq k\}| < 2^{k+1}. 
\]
\fi

Given a string $x$ and its Kolmogorov complexity $C(x)$, one can effectively enumerate all descriptions $y$ of $x$ of length $C(x)$, \ie, the set $\{y \in \zo^{C(x)} \mid U(y) = x\}$. We denote $x^*$ the first string in this enumeration. Note that
$C(x) - O(1) \leq C(x^*) \leq |x^*| +O(1)= C(x)  + O(1)$.

The Symmetry of Information Theorem~\cite{zvo-lev:j:kol} states that for any two strings $x$ and $y$, 
\begin{itemize}
\item[(a)] $C(xy) \leq C(y) + C(x \mid y) + 2 \log C(y) +O(1)$.
\item[(b)] $C(xy) \geq C(x) + C(y \mid x) - 2 \log C(xy) - 4 \log \log C(xy) - O(1)$.
\item[(c)] If $|x| = |y| = n$, $C(y) - C(y\mid x) \geq C(x) - C(x \mid y) - 5 \log n$
\end{itemize}
Since the theorem is usually stated in a slightly different form and since we use the constants specified above, we present in the appendix the proof (which follows the standard method).

As discussed in the Introduction, our main focus is on sets of strings having certain dependency or independency properties. For convenience, we restate here the main definitions.
\begin{definition}
The string $y$ has $\alpha$-dependency (where $\alpha \in \nat$) with the string $x$ if $C(y) - C(y \mid x) \geq \alpha$ or if $x$ coincides with $y$.
\end{definition}
We have included the case ``$x$ coincides with $y$" to make a string dependent with itself even in case it has low Kolmogorov complexity.
\begin{definition}
The strings $x_1, \ldots, x_t$ are pairwise $\alpha$-independent if for all $i \not= j$, $C(x_i) - C(x_i \mid x_j) \leq \alpha$.
\end{definition}
\begin{definition}
The tuple of strings $(x_1, \ldots, x_t)$ is mutually $\alpha$-independent (where $\alpha \in \nat$) if
$C(x_{\pi(1)} x_{\pi(2)} \ldots x_{\pi(t)}) \geq C(x_1) + C(x_2) + \ldots + C(x_t) - \alpha$, for every permutation $\pi$ of $[t]$.
\end{definition}
\section{Strings dependent with a given string}

Given a string $x \in \zo^n$, and $\alpha \in \nat$, how many strings have dependency with $x$ at least  $\alpha$? That is we are interested in estimating the size of the set
\[
A_{x, \alpha} = \{y \in \zo^n \mid C(y) - C(y \mid x) \geq \alpha \}.
\]
This is the set of strings about which, roughly speaking, $x$ has at least $\alpha$ bits of information. A related set is
\[
B_{x, \alpha} = \{y \in \zo^n \mid C(y \mid n) - C(y \mid x) \geq \alpha \},
\]
consisting of the $n$-bit strings about which $x$ provides $\alpha$ bits of information besides the length $n$. Clearly, $B_{x,\alpha} \subseteq A_{x, \alpha}$, and thus an upper bound for $|A_{x,\alpha}|$ also holds for $|B_{x,\alpha}|$, and a lower bound for $|B_{x,\alpha}|$ also holds for $|A_{x,\alpha}|$.

We show that for some polynomial $p$ and for some constant $C$, for all $x$ and $\alpha$ except some special values,
\[
(1/p(n)) \cdot 2^{n-\alpha} \leq |A_{x, \alpha}| \leq C 2^{n-\alpha}, 
\]
and, in case $\alpha(n)$ is computable from $n$,
\[
(1/C) \cdot 2^{n-\alpha} - p(n)2^{\alpha} \leq |B_{x, \alpha}| \leq C 2^{n-\alpha}, 
\]
The upper bounds for the sizes of  $A_{x, \alpha}$  and $B_{x, \alpha}$ can be readily derived. Observe that the set $A_{x, \alpha}$ is included in $\{ y \in \zo^n \mid C(y \mid x) < n - \alpha + c \}$ for some constant $c$, and therefore 
\[
|A_{x, \alpha}| \leq C \cdot 2^{n-\alpha},
\]
 for $C= 2^c$. 

We move to finding a lower bound for the size of $A_{x, \alpha}$.
A first observation is that for $A_{x, \alpha}$ to be non-empty, it is needed that $C(x) \succeq \alpha$. Indeed, it is immediate to observe that for any strings $x$ and $y$ of length $n$,
\[
C(y) \leq C(x) + C(y \mid x) + 2 \log C(x) + O(1) \leq C(x) + C(y \mid x) + 2 \log n + O(1),
\]
and thus, if $C(y) - C(y \mid x) \geq \alpha$, then $C(x) \geq \alpha - 2\log n - O(1)$. Intuitively, if the information in $x$ is close to $\alpha$, not too many strings can be $\alpha$-dependent with it.

We provide a lower bound for $|A_{x, \alpha}|$, for every string $x$ with $C(x) \geq \alpha + 7 \log n$. The proof uses the basic "normal" approach presented in the Introduction. To simplify the discussion, suppose $C(x) = \alpha$. Then if we take  a string $z$ of length $n-\alpha$ that is random conditioned by $x^*$, it holds that $C(x^*z) \approx n$ and $C(x^*z \mid x^*) \approx n- \alpha$. Thus, $C(x^*z) - C(x^* z \mid x^*) \succeq \alpha$.  Note that there are approximately $2^{n-\alpha}$ such strings $x^*z$. Since $x^*$ can be obtained from $x$ and $C(x)$, we can replace $x^*$ by $x$ in the conditioning at a small price. We obtain approximately $2^{n-\alpha}$ strings in $A_{x, \alpha}$.

\begin{theorem}
\label{t:firstestimate}
For every natural number $n$, for every natural number $\alpha$ and for every $x \in \zo^n$ such that $C(x) \geq \alpha + 7 \log n$,
\[
|A_{x, \alpha}| \geq \frac{1}{2n^7} 2^{n-\alpha},
\]
provided $n$ is large enough.
\end{theorem}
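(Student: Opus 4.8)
The plan is to realize, with due care about logarithmic slacks, the ``normal'' construction sketched in the Introduction. First I would fix $x \in \zo^n$ with $C(x) \ge \alpha + 7\log n$ and let $x^*$ be the first shortest description of $x$, so that $|x^*| = C(x)$ and, since a shortest description is incompressible, $C(x^*) \ge C(x) - O(1)$. Set $\ell = \alpha + 7\log n$ (rounded up to an integer); by hypothesis $\ell \le C(x) = |x^*|$, so the length-$\ell$ prefix $p := x^*(1:\ell)$ is well defined. For every $z \in \zo^{\,n-\ell}$ with $C(z \mid p) \ge n - \ell - 1$ put $y_z := pz \in \zo^n$. A routine counting argument gives at least $2^{\,n-\ell-1}$ such $z$, and the map $z \mapsto y_z$ is injective because $z$ is the length-$(n-\ell)$ suffix of $y_z$. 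Thus it is enough to show that each such $y_z$ belongs to $A_{x,\alpha}$, since then $|A_{x,\alpha}| \ge 2^{\,n-\ell-1} = \frac{1}{2n^7}\,2^{n-\alpha}$.

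To show $y_z = pz \in A_{x,\alpha}$ I would prove the two estimates $C(y_z) \ge n - O(\log n)$ and $C(y_z \mid x) \le n - \ell + O(\log n)$; subtracting them gives $C(y_z) - C(y_z \mid x) \ge \ell - O(\log n) = \alpha + 7\log n - O(\log n) \ge \alpha$ for all large $n$, which is precisely where the assumption $C(x) \ge \alpha + 7\log n$ is spent. For the first estimate: $p$ is a prefix of the incompressible string $x^*$, and describing $x^*$ as [self-delimiting shortest program for $p$][the remaining bits of $x^*$ written verbatim] and comparing with $C(x^*) \ge |x^*| - O(1)$ yields $C(p) \ge \ell - O(\log n)$; then Symmetry of Information part~(b) gives $C(pz) \ge C(p) + C(z \mid p) - O(\log n) \ge (\ell - O(\log n)) + (n-\ell-1) - O(\log n) = n - O(\log n)$. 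For the second estimate: from $x$, together with the binary encodings of $C(x)$ and of $\ell$ (each at most $\log n + O(1)$ bits, concatenated self-delimitingly), one recovers $x^*$ by enumerating the length-$C(x)$ programs for $x$ and truncates it to $p$; feeding this a shortest program for $z$ given $p$ (of length $C(z \mid p) \le |z| + O(1) = n-\ell+O(1)$) outputs $pz$, so $C(y_z \mid x) \le n - \ell + O(\log n)$.

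The construction and the two estimates are routine — they are essentially the informal recipe stated in the Introduction — so the only genuine work is the constant-chasing. One must check that the additive $O(\log n)$ terms, coming from Symmetry of Information (with the explicit $2\log C(\cdot) + 4\log\log C(\cdot)$ losses recorded in the Preliminaries), from the prefix-incompressibility estimate, and from the self-delimiting encodings of $C(x)$ and $\ell$, together fit inside $7\log n$ for all sufficiently large $n$; this is what pins down the exponent $7$ (and the leading constant $2$) in the statement, and it is the step I expect to be the main obstacle. A few boundary cases — for instance $\ell$ close to $n$, so that $n - \ell$ is small or zero — should also be dispatched, but they follow directly from the counting bound and are not delicate.
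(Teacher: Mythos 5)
Your proof is correct and follows essentially the same route as the paper's: take the length-$(\alpha+7\log n)$ prefix of $x^*$, extend it by strings that are random relative to that prefix, and bound $C(y)$ from below and $C(y\mid x)$ from above using symmetry of information. The constant-chasing you flag as the remaining obstacle does go through exactly as in the paper: the total loss is $6\log n + 8\log\log n + O(1)$, which fits inside $7\log n$ for all sufficiently large $n$.
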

\begin{proof}
Let $k = C(x)$ and let $\beta = \alpha + 7 \log n$.  Let $x^*$ be the smallest description of $x$ as described in the Preliminaries.   Let $x^*_\beta$ be the prefix of $x^*$ of length $\beta$. Since $x^*$ is described by $x^*_\beta$ and by its suffix of length $k-\beta$, $C(x^*) \leq C(x^*_\beta) + (k-\beta) + 2 \log C(x^*_\beta) + O(1)$ and, thus
\[
\begin{array}{ll}
C(x^*_\beta) & \geq C(x^*) - (k - \beta) - 2 \log C(x^*_\beta) - O(1) \\
& \geq (k - O(1)) - (k -\beta) - 2 \log C(x^*_\beta) - O(1) \\
& \geq \beta - 2 \log \beta - O(1).
\end{array}
\]
The set $B = \{z \in \zo^{n - \beta} \mid C(z \mid x^*_\beta) \geq n - \beta  - 1\}$ has size at least $(1/2) \cdot 2^{n-\beta}$ (using a standard counting argument). Consider a string $y \in \zo^n$ of the from $y = x^*_\beta z$ with $z \in B$. There are at least $(1/2)\cdot 2^{n-\beta}$ such strings. 

By symmetry of information,
\[
\begin{array}{ll}
C(y) & = C(x^*_\beta z) \geq C(x^*_\beta) + C(z \mid x^*_\beta) - (2 \log n + 4 \log \log n +O(1)) \\
& \geq (\beta - 2 \log \beta) + (n- \beta -1) - (2 \log n + 4 \log \log n +O(1)) \\
& \geq n - (4 \log n + 4 \log \log n +O(1)) \geq n - 5 \log n.
\end{array}
\]
On the other hand, $C(y \mid x^*_\beta) = C(x^*_\beta z \mid x^*_\beta) \leq C(z) + O(1) \leq (n-\beta) + O(1)$. Note that 
\[
C(y \mid x) \leq C(y \mid x^*_\beta) + 2 \log n + 4 \log \log n + O(1),
\]
because one can effectively construct $x^*_\beta$ from $x, k$ and $\beta$. Therefore,
\[
C(y \mid x) \leq (n-\beta) + 2 \log n + 4 \log \log n +O(1),
\]
and thus
\[
C(y) - C(y \mid x) \geq \beta - (6 \log n + 8 \log \log n +O(1)) \geq \beta - 7 \log n.
\]
So, $y \in A_{x,\beta - 7 \log n} = A_{x, \alpha}$. Since this holds for all the strings $y$ mentioned above, it follows that
$|A_{x, \alpha}| \geq (1/2)2^{n-\beta} = (1/(2n^7)) \cdot 2^{n-\alpha}$.~\qed
\end{proof}
\smallskip

The lower bound for $|B_{x, \alpha}|$ is obtained using a technique based on Kolmogorov complexity extractors, as explained in the Introduction.
We use the following theorem which can be obtained by a simple modification of a  result from~\cite{zim:c:impossibamplific}.
\begin{theorem}
\label{t:kolmextractor} For any computable functions $s(n), m(n)$ and $\alpha(n)$ with $n \geq s(n) \geq \alpha(n) + 7 \log n $ and $m(n) \leq s(n) - 7 \log n$, there exists a computable ensemble of functions $E: \zo^n \times \zo^n \mapping \zo^{m(n)}$
such that for all $x$ and $y$ in $\zo^n$  
\begin{itemize}
\item if $C(x) \geq s(n), C(y \mid n) \geq s(n)$ and $C(y \mid n) - C(y \mid x) \leq \alpha(n)$

\item then $C(E(x,y) \mid x ) \geq m(n) - \alpha(n) - O(1)$.
\end{itemize}
\end{theorem}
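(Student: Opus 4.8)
The plan is to obtain $E$ by adapting the bounded-dependency Kolmogorov complexity extractor of~\cite{zim:c:impossibamplific} and re-running its correctness argument with the complexity of $y$ measured conditionally on $n$. The extractor there is a computable ensemble $E:\zo^n\times\zo^n\mapping\zo^{m}$ (produced, as is typical for such objects, by an exhaustive search over truth tables whose success is guaranteed by a counting argument), and it comes with a correctness lemma of essentially the shape we want: under a lower bound on the complexities of both inputs and an upper bound on their dependency, $C(E(x,y)\mid x)$ is forced to be close to $m$ (the cited result in fact also controls $C(E(x,y)\mid y)$; we use only the half about $x$). The first task is to observe that nothing in that construction or its proof uses that $s,m,\alpha$ are constants: if $s(n),m(n),\alpha(n)$ are computable, then the ensemble that, for each $n$, outputs the first truth table meeting the required combinatorial conditions for the parameters $s(n),m(n),\alpha(n)$ is still a computable ensemble, and every place where the original correctness proof ``hard-wires'' a parameter is replaced by reading it off from $n$ at no cost.

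The second and only substantive task is to replace $C(\cdot)$ by $C(\cdot\mid n)$ for $y$ — in the entropy hypothesis $C(y\mid n)\ge s(n)$, in the dependency hypothesis $C(y\mid n)-C(y\mid x)\le\alpha(n)$, and in every intermediate complexity estimate involving $y$ or strings built from $y$. Since both inputs have length $n$, the string $n$ can be handed for free (or at cost $O(\log n)$) to every machine appearing in the argument, so each such quantity moves by at most $O(\log n)$. The engine of the correctness proof is an incompressibility argument of the following shape: assuming $C(E(x,y)\mid x)<m(n)-\alpha(n)-c$, the value $E(x,y)$ lies in the set of strings with a description of length $<m(n)-\alpha(n)-c$ given $x$, a set of size $<2^{m(n)-\alpha(n)-c}$; pulling this set back through $E(x,\cdot)$ and invoking the spreading property of $E$ (which is stronger than mere balancedness of the section $E(x,\cdot)$, and is where the requirement $m(n)\le s(n)-7\log n$ enters) places $y$ in a set that is (i) enumerable from $x$ and $n$ and (ii) too small to contain a string with $C(y\mid x)\ge C(y\mid n)-\alpha(n)\ge s(n)-\alpha(n)$ — the first inequality from the dependency hypothesis, the second from the entropy hypothesis — unless $C(x)<s(n)$, a contradiction. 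One then checks that, after the shift to conditional complexity and to computable parameters, every $O(\log n)$ slack term in this chain is still covered by the $7\log n$ budget built into ``$s\ge\alpha+7\log n$'' and ``$m\le s-7\log n$''.

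The hard part is precisely this last bookkeeping: confirming that the finitely many $O(\log n)$ error terms — from symmetry of information, from self-delimiting concatenations, from supplying $n$, and from the slack in the underlying combinatorial object — sum to at most $7\log n$, and verifying that no step of the original proof relies on the parameters being constant in a way the computable-from-$n$ versions fail to support (none does, since every such use is mediated by an enumeration or a description to which $n$ is already available). I expect this to be routine but tedious; the genuinely non-trivial combinatorics — the existence of the spreading function $E$ itself — is inherited unchanged from~\cite{zim:c:impossibamplific}.
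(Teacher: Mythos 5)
Your proposal matches the paper exactly: the paper gives no proof of this theorem beyond the remark that it ``can be obtained by a simple modification of a result from~\cite{zim:c:impossibamplific}'', and your plan --- inherit the balanced/spreading table from that construction, let the parameters be computable functions of $n$, and rerun the incompressibility argument with $C(\cdot \mid n)$ in place of $C(\cdot)$ for $y$ while absorbing the $O(\log n)$ overheads into the $7\log n$ slack --- is precisely that modification, spelled out in more detail than the paper itself provides. No discrepancy to report.
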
 
\begin{theorem}  
\label{t:conditionalestimate}
Let $\alpha(n)$ be a computable function. For every sufficiently large natural number $n$, for every $x \in \zo^n$ such that $C(x) \geq \alpha(n) + 8 \log n$, 
\[
|B_{x, \alpha(n)}| \geq \frac{1}{C}  \cdot 2^{n-\alpha(n)} - n^{8} 2^{\alpha(n)},
\]
for some positive constant $C$.
\end{theorem}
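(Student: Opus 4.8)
The plan is to realize the Kolmogorov-extractor argument outlined in the Introduction, with the parameters of Theorem~\ref{t:kolmextractor} chosen so as to leave just enough room. Fix a constant $c_1$ (to be pinned down in the third step) and set $s(n) = \alpha(n) + c_1 + 7\log n$ and $m(n) = \alpha(n) + c_1$. For all sufficiently large $n$ these choices satisfy the hypotheses of Theorem~\ref{t:kolmextractor}: clearly $s(n) \geq \alpha(n) + 7\log n$ and $m(n) \leq s(n) - 7\log n$; and the hypothesis $C(x) \geq \alpha(n) + 8\log n$ together with $C(x) \leq n + O(1)$ forces $\alpha(n) \leq n - 8\log n + O(1)$, whence $s(n) \leq n - \log n + O(1) \leq n$ and also $s(n) \leq \alpha(n) + 8\log n \leq C(x)$ (using $\log n \geq c_1$). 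Let $E : \zo^n \times \zo^n \mapping \zo^{m(n)}$ be the resulting computable ensemble.

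With $x$ fixed, restrict $E$ to $\{x\} \times \zo^n$ and let $z_0 \in \zo^{m(n)}$ be a most popular output, i.e., one for which the fiber $E^{-1}(z_0) \cap (\{x\} \times \zo^n)$ has maximum cardinality; by pigeonhole this fiber has size at least $2^{n - m(n)}$. Because the ensemble $E$ is computable and $n = |x|$, $\alpha(n)$, $m(n)$ are all computable from $x$, the string $z_0$ is produced from $x$ by a single fixed algorithm (evaluate $E(x,\cdot)$ on all $2^n$ arguments, tally, break ties lexicographically), so $C(z_0 \mid x) \leq c_0$ for some universal constant $c_0$.

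Now pin down $c_1$: let $c_2$ be the $O(1)$ constant in the conclusion of Theorem~\ref{t:kolmextractor}, and choose $c_1 > c_0 + c_2$. The heart of the argument is the contrapositive. If some $y$ made $(x,y)$ \emph{good-for-extraction}, i.e., $C(y \mid n) \geq s(n)$ and $C(y\mid n) - C(y\mid x) \leq \alpha(n)$ (recall $C(x) \geq s(n)$ already holds), then Theorem~\ref{t:kolmextractor} would give $C(E(x,y)\mid x) \geq m(n) - \alpha(n) - c_2 = c_1 - c_2 > c_0 \geq C(z_0\mid x)$, hence $E(x,y) \neq z_0$. So every $y$ in the fiber of $z_0$ fails to be good-for-extraction, meaning $C(y\mid n) < s(n)$ or $C(y\mid n) - C(y\mid x) > \alpha(n)$; in the second case $y \in B_{x,\alpha(n)}$. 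The number of $y \in \zo^n$ with $C(y\mid n) < s(n)$ is at most $2^{s(n)} = 2^{c_1} n^7 2^{\alpha(n)} \leq n^8 2^{\alpha(n)}$ for large $n$. Subtracting these from the fiber size yields $|B_{x,\alpha(n)}| \geq 2^{n-m(n)} - n^8 2^{\alpha(n)} = (1/C) 2^{n-\alpha(n)} - n^8 2^{\alpha(n)}$ with $C = 2^{c_1}$, as claimed.

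The step I expect to require the most care is the simultaneous satisfaction of the parameter constraints: $m(n)$ must exceed $\alpha(n)$ by a constant large enough to beat both the description cost $C(z_0\mid x) \leq c_0$ and the extractor loss $c_2$, yet we also need $s(n) = m(n) + 7\log n \leq \min\{n, C(x)\}$; the hypothesis $C(x) \geq \alpha(n) + 8\log n$ supplies precisely the $\Theta(\log n)$ of slack that makes this possible, but only once $n$ is large enough to absorb $c_0, c_1, c_2$ and the constant in $C(x)\le n+O(1)$, which is why the statement is only claimed for sufficiently large $n$. One must also ensure the algorithm producing $z_0$ recomputes $\alpha(n)$ and $m(n)$ from $n = |x|$ rather than receiving them as advice, so that $C(z_0 \mid x)$ is genuinely $O(1)$ and not $O(\log n)$; this is exactly where the computability of $\alpha$ enters.
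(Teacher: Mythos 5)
Your proposal is correct and follows essentially the same argument as the paper's own proof: instantiate the Kolmogorov extractor of Theorem~\ref{t:kolmextractor} with $m$ exceeding $\alpha(n)$ by a suitable constant, take the most popular output over $\{x\}\times\zo^n$ (describable from $x$ in $O(1)$ bits thanks to the computability of $\alpha$), and conclude that its fiber of size $\geq 2^{n-m}$ consists only of bad-for-extraction pairs, of which at most $2^{s}$ come from low-complexity $y$. The only cosmetic difference is the split of the $\Theta(\log n)$ slack between $s$ and $m$ (the paper takes $s=\alpha+8\log n$, you take $s=\alpha+c_1+7\log n$), which changes nothing.
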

\begin{proof}
Let $m = \alpha(n) + c$ and $s = \alpha(n) + 8\log n$, where $c$ is a constant that will be specified later. 
Consider $E: \zo^n \times \zo^n \mapping \zo^m$ the Kolmogorov extractor given by Theorem~\ref{t:kolmextractor} for these parameters. Let $z \in \zo^m$ be the string that has the largest number of $E$ preimages in the set $\{x\} \times \zo^n$. Note that, for some constant $c_1$, $C(z \mid x) \leq c_1$, because, given $x$, $z$ can be constructed from a table of $E$, which at its turn can be constructed from $n$ which is given because it is the length of $x$. On the other hand, if $y \in \zo^n$ is a string with $C(y \mid n) \geq s$ and $C(y \mid n) - C(y \mid x) \leq \alpha(n)$, then Theorem~\ref{t:kolmextractor} guarantees that, for some constant $c_2$, $C(E(x,y) \mid x) \geq m - \alpha(n) - c_2 = c-c_2 > c_1$, for an appropriate $c$.
Therefore all the strings $y$ such that $E(x,y) = z$ are bad for extraction, \ie, they belong to 
\[
\{y \in \zo^n \mid C(y \mid n) < s\} \cup \{y \in \zo^n \mid C(y \mid n) \geq s \mbox{ and } C(y \mid n) - C(y \mid x) \geq \alpha\}.
\]
Since there are at least $2^{n-m}$ such strings $y$ and the first set above has less than $2^s$ elements, it follows that
\[
|\{y \in \zo^n \mid C(y\mid n) - C(y \mid x) \geq \alpha(n)\}| \geq 2^{n-m} - 2^{s} = \frac{1}{2^c}  \cdot 2^{n-\alpha(n)} - n^{8} 2^{\alpha(n)}.
\]
This concludes the proof.~\qed
\end{proof}
\if01
A similar proof using Kolmogorov extractors gives for many setting of parameters a better lower bound for $|A_{x, \alpha}|$ than the one from Theorem~\ref{t:firstestimate}.

\begin{theorem}  
\label{t:secondestimate}
Let $\alpha(n)$ be a computable function. For every sufficiently large natural number $n$, for every $x \in \zo^n$ such that $C(x) \geq \alpha(n) + 12 \log n$, 
\[
|A_{x, \alpha(n)}| \geq \frac{1}{2^{\tilde{C}(n)} (\alpha(n)+\tilde{C}(n))^8} \cdot 2^{n-\alpha(n)} - n^{12} 2^{\alpha(n)} \geq \frac{1}{n \cdot \polylog n}2^{n-\alpha(n)} - n^{12}2^{\alpha(n)}.
\]
\end{theorem}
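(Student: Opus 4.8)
\noindent\emph{Proof idea.} The plan is to run the extractor-based argument already used for $B_{x,\alpha(n)}$ in Theorem~\ref{t:conditionalestimate} and simply observe that it also bounds $A_{x,\alpha(n)}$, since $B_{x,\alpha(n)}\subseteq A_{x,\alpha(n)}$ (recall $C(y)\ge C(y\mid n)$ for every $y\in\zo^n$). Concretely, I would instantiate the Kolmogorov extractor of Theorem~\ref{t:kolmextractor} with $s(n)=\alpha(n)+12\log n$ and $m(n)=\alpha(n)+\tilde C(n)$, where $\tilde C(n)$ is a slowly growing computable function ($O(\log n)$, or even a suitable constant) to be fixed below; all three of $s,m,\alpha$ are computable from $n$ because $\alpha$ is. First I would check the hypotheses of Theorem~\ref{t:kolmextractor}, namely $n\ge s(n)\ge\alpha(n)+7\log n$ and $m(n)\le s(n)-7\log n$: the inequality $s(n)\ge\alpha(n)+7\log n$ is immediate, $m(n)\le s(n)-7\log n$ amounts to $\tilde C(n)\le 5\log n$, and $s(n)\le n$ holds in the only nontrivial regime, since $C(x)\le n+O(1)$ together with the hypothesis $C(x)\ge\alpha(n)+12\log n$ forces $\alpha(n)\le n-12\log n+O(1)$ (outside this regime the claimed lower bound is negative, hence vacuous). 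So the choice $12\log n$ comfortably covers every constraint.

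Next, I would fix $x\in\zo^n$ with $C(x)\ge\alpha(n)+12\log n\ge s(n)$ and let $z\in\zo^{m(n)}$ be the string with the most $E$-preimages inside $\{x\}\times\zo^n$; pigeonhole gives $|E^{-1}(z)\cap(\{x\}\times\zo^n)|\ge 2^{n-m(n)}$. Since $n=|x|$ determines $m(n),s(n),\alpha(n)$ and hence the finite table of values of $E$ on $\{x\}\times\zo^n$, the string $z$ can be reconstructed from $x$ alone, so $C(z\mid x)\le c_1$ for an absolute constant $c_1$. On the other hand, for any $y\in\zo^n$ that is \emph{good for extraction}, i.e.\ $C(y\mid n)\ge s(n)$ and $C(y\mid n)-C(y\mid x)\le\alpha(n)$, Theorem~\ref{t:kolmextractor} yields $C(E(x,y)\mid x)\ge m(n)-\alpha(n)-c_2=\tilde C(n)-c_2$ for an absolute constant $c_2$; choosing $\tilde C(n)$ with $\tilde C(n)-c_2>c_1$ (a constant already works, and taking $\tilde C(n)=O(\log n)$ leaves the room needed to absorb the polynomial overhead that the concrete form of Theorem~\ref{t:kolmextractor} carries) forces $E(x,y)\ne z$. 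Hence every $y$ with $E(x,y)=z$ is bad for extraction: either $C(y\mid n)<s(n)$, and there are fewer than $2^{s(n)}$ such strings, or $C(y\mid n)-C(y\mid x)\ge\alpha(n)$, in which case $y\in B_{x,\alpha(n)}\subseteq A_{x,\alpha(n)}$.

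Combining, $|A_{x,\alpha(n)}|\ge|B_{x,\alpha(n)}|\ge 2^{n-m(n)}-2^{s(n)}=\frac{1}{2^{\tilde C(n)}}\,2^{n-\alpha(n)}-n^{12}2^{\alpha(n)}$, which is the stated estimate once the $\poly$ factor in the denominator is read off as $2^{\tilde C(n)}$ times the polynomial slack inherited from the precise statement of Theorem~\ref{t:kolmextractor}; the crude simplification to $\frac{1}{n\,\polylog n}2^{n-\alpha(n)}-n^{12}2^{\alpha(n)}$ then follows. I do not expect a genuinely new difficulty here beyond the one step that must be gotten exactly right, namely the bound $C(z\mid x)\le c_1$: this is precisely where computability of $\alpha(n)$ is used, and without it pinning $z$ down from $x$ costs an extra $\approx\log\alpha(n)$ bits, so that the argument degrades to the $\poly(n)$-slack estimate of Theorem~\ref{t:firstestimate}. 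The remainder is parameter bookkeeping — keeping $s(n)=\alpha(n)+12\log n$ large enough that all constraints of Theorem~\ref{t:kolmextractor} hold with room to spare, while keeping $m(n)$ large enough that $m(n)-\alpha(n)-O(1)$ strictly beats $C(z\mid x)$.
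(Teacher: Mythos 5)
Your argument is correct, but it reaches the bound by a different route than the paper does. The paper's own proof of Theorem~\ref{t:secondestimate} runs the extractor argument directly on the set $A_{x,\alpha(n)}$, i.e.\ with unconditional complexities $C(y)$: there the ``good for extraction'' condition is $C(y)\ge s$ and $C(y)-C(y\mid x)\le\alpha(n)$, and reconciling this with the hypotheses of Theorem~\ref{t:kolmextractor} (which are phrased in terms of $C(y\mid n)$) is what forces the quantity $\tilde C(n)\approx C(n)+O(1)$ into the parameter $m$ and hence the loss $2^{\tilde C(n)}(\alpha(n)+\tilde C(n))^8$ in the main term. You instead prove the bound for $B_{x,\alpha(n)}$ --- which is literally the content and proof of Theorem~\ref{t:conditionalestimate} with $s=\alpha(n)+12\log n$ --- and then transfer it to $A_{x,\alpha(n)}$ via the containment $B_{x,\alpha}\subseteq A_{x,\alpha}$ that the paper itself asserts (pedantically this containment costs an additive $O(1)$ in $\alpha$, since only $C(y\mid n)\le C(y)+O(1)$ is guaranteed, but that is absorbed into the constants exactly as elsewhere in the paper). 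Your route is cleaner and actually yields the stronger estimate $\frac{1}{C}2^{n-\alpha(n)}-n^{12}2^{\alpha(n)}$ with $C$ an absolute constant, which implies the stated $\frac{1}{n\cdot\polylog n}2^{n-\alpha(n)}-n^{12}2^{\alpha(n)}$; what it gives up is only the explicit form of the constant in terms of $\tilde C(n)$ appearing in the theorem statement, which is immaterial since your bound dominates it. Your parameter checks ($s(n)\le n$ in the nonvacuous regime, $m(n)\le s(n)-7\log n$, and the role of computability of $\alpha$ in getting $C(z\mid x)\le c_1$) all match the paper's; the only point to state once explicitly is that the constants $c_1,c_2$ are fixed by the extractor ensemble before the free constant $\tilde C(n)$ in $m$ is chosen, the same order-of-quantifiers move made in the proof of Theorem~\ref{t:conditionalestimate}.
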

\begin{proof} 
Let $m = \alpha(n) + \tilde{C}(n) + 4  \log (\alpha(n) + \tilde{C}(n))$ and $s = \alpha(n) + 12\log n$. 
Consider $E: \zo^n \times \zo^n \mapping \zo^m$ the Kolmogorov extractor given by Theorem~\ref{t:kolmextractor} for these parameters. Let $z \in \zo^m$ be the string that has the largest number of $E$ preimages in the set $\{x\} \times \zo^n$. Note that $C(z \mid x) \leq C(n) + O(1)$, because, given $x$, $z$ can be constructed from a table of $E$, which at its turn can be constructed from $n$. On the other hand, if $y \in \zo^n$ is a string with $C(y) \geq s$ and $C(y) - C(y \mid x) \leq \alpha(n)$, then Theorem~\ref{t:kolmextractor} guarantees that $C(E(x,y) \mid x) \geq m - \alpha(n) - 4 \log m > C(n) + O(1)$.
Therefore all the strings $y$ such that $E(x,y) = z$ are bad for extraction, \ie, they belong to $\{y \in \zo^n \mid C(y) < s\} \cup \{y \in \zo^n \mid C(y) \geq s \mbox{ and } C(y) - C(y \mid x) \geq \alpha\}$. Thus,
$|\{y \in \zo^n \mid C(y) - C(y \mid x) \geq \alpha(n)\}| \geq 2^{n-m} - 2^{s} = \frac{1}{2^{\tilde{C}(n)} (\alpha(n)+\tilde{C}(n))^8} \cdot 2^{n-\alpha(n)} - n^{12} 2^{\alpha(n)}$.~\qed

\end{proof}
\fi
\smallskip

The proof of Theorem~\ref{t:firstestimate} actually shows more: The lower bound applies even to a subset of $A_{x, \alpha}$ containing only strings with high Kolmogorov complexity. More precisely, if we denote $A_{x,\alpha, s} = \{y \in \zo^n \mid C(y) \geq s \mbox{ and } C(y) - C(y \mid x) \geq \alpha\}$, then $|A_{x, \alpha, n - 5 \log n}| \geq \frac{1}{2n^7}2^{n-\alpha}$.
Note that there is an interesting ``zone" for the parameter $s$ that is not covered by this result. Specifically, it would be interesting to lower bound the size of $A_{x,\alpha,n}$. This question remains open. Nevertheless, the technique from Theorem~\ref{t:conditionalestimate} can be used to tackle the variant in which access to the set $R = \{u \in \zo^n \mid C(u) \geq |u|\}$ is granted for free. Thus, let $A_{x,\alpha, n}^R = \{y \in \zo^n \mid C^R(y) \geq n \mbox{ and } C^R(y) - C^R(y \mid x) \geq \alpha\}$. 
\begin{proposition} For the same setting of parameters as in Theorem~\ref{t:conditionalestimate},
$|A_{x,\alpha, n}^R| \geq \frac{1}{C} \cdot 2^{n-\alpha(n)}$, for some positive constant $C$.
\end{proposition}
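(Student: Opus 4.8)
The plan is to relativize the proof of Theorem~\ref{t:conditionalestimate} to the oracle $R$ and to replace the step in which the at most $2^s$ low-complexity strings are subtracted (the source of the $-n^8 2^{\alpha(n)}$ term) by a step that restricts the extractor to $R$-random inputs from the outset. I read ``the same setting of parameters'' in its relativized form, so the hypothesis on $x$ is $C^R(x) \ge \alpha(n) + 8\log n$; this is what the argument needs, and it cannot be weakened to $C(x)\ge\alpha(n)+8\log n$, since $A^R_{x,\alpha(n),n}$ can be empty when $C^R(x)$ is small (e.g.\ for a plain-random $x$ that is itself $R$-computable from its length). First I would relativize Theorem~\ref{t:kolmextractor}: its construction carries over verbatim with an $R$-oracle, giving an $R$-computable ensemble $E^R : \zo^n \times \zo^n \mapping \zo^m$ such that if $C^R(x)\ge s$, $C^R(y)\ge s$ and $C^R(y)-C^R(y\mid x)\le\alpha(n)$, then $C^R(E^R(x,y)\mid x)\ge m-\alpha(n)-O(1)$. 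I take $s=\alpha(n)+8\log n$ and $m=\alpha(n)+c$ for a constant $c$ fixed later.

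Next I would record the relativized counting fact: the set $G=\{\, y\in\zo^n \mid C^R(y)\ge n \,\}$ of $R$-random $n$-bit strings satisfies $|G|\ge 2^{\,n-c'}$, where $c'$ is the universality constant of the relativized machine. Indeed the at most $2^n-1$ programs of length $<n$ must already serve as shortest $R$-programs for all $2^{\,n-c'}-1$ strings of length $\le n-c'-1$, leaving at most $2^n-2^{\,n-c'}$ of them available for strings of length $n$. Now let $z\in\zo^m$ be a most popular value of $E^R(x,\cdot)$ on $G$ and let $P_z=\{\, y\in G \mid E^R(x,y)=z \,\}$, so $|P_z|\ge |G|/2^m \ge 2^{\,n-c'-\alpha(n)-c}$. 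Every $y\in P_z$ already satisfies $C^R(y)\ge n\ge s$, so it cannot be good for extraction, and the only remaining possibility is $C^R(y)-C^R(y\mid x)\ge\alpha(n)$; hence $P_z\subseteq A^R_{x,\alpha(n),n}$ and $|A^R_{x,\alpha(n),n}|\ge (1/C)\,2^{\,n-\alpha(n)}$ with $C=2^{\,c'+c}$. The key point is that no subtractive term arises: by pre-selecting the domain to consist of the $R$-random strings---exactly the strings $A^R_{x,\alpha(n),n}$ is about---every preimage of $z$ automatically lands in $A^R_{x,\alpha(n),n}$.

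The step I expect to be the main obstacle is the bound $C^R(z\mid x)=O(1)$, which is what lets me fix $c$ large enough that $m-\alpha(n)-O(1)>C^R(z\mid x)$ and conclude $E^R(x,y)\neq z$ for every good $y$. From $x$ one recovers $n=|x|$ and then $E^R$ with $O(1)$ further bits, but one also has to recover $G$; its complement is a finite $R$-c.e.\ set, so $G$ is the limit of a shrinking $R$-computable sequence of over-approximations, and $z$ is the corresponding limit of most popular values. Turning ``the limit of a finitely-changing $R$-computable sequence'' into an $O(1)$-bit description rather than an $O(\log n)$-bit one is precisely where the strength of having $R$ as an oracle has to be exploited carefully, and it is the delicate part of the argument. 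I would note that even with an $O(\log n)$-bit description of $z$ the same reasoning still yields $|A^R_{x,\alpha(n),n}|\ge (1/\poly(n))\,2^{\,n-\alpha(n)}$, so the $-n^8 2^{\alpha(n)}$ term of Theorem~\ref{t:conditionalestimate} is removed in any case; getting the clean constant $1/C$ is the remaining work.
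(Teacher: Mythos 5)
Your outline coincides with the natural (and, as far as one can tell, intended) route: the paper itself only writes ``Omitted from this extended abstract'' for this proposition, but the evident plan is exactly yours --- relativize Theorem~\ref{t:kolmextractor} to $R$, restrict the second argument of the extractor to $G=\{y\in\zo^n \mid C^R(y)\ge n\}$, note $|G|\ge 2^{n-c'}$, and observe that every preimage in $G$ of a popular output value is automatically bad for extraction, so the subtractive $2^s$ term of Theorem~\ref{t:conditionalestimate} never appears. Your counting of $|G|$, the inclusion $P_z\subseteq A^R_{x,\alpha(n),n}$, and your reading of the hypothesis as $C^R(x)\ge\alpha(n)+8\log n$ are all fine.

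The genuine gap is the step you flag and then leave unresolved: $C^R(z\mid x)=O(1)$. In Theorem~\ref{t:conditionalestimate} this bound is immediate because $z$ is the most popular value of $E(x,\cdot)$ on all of $\zo^n$, a set computable from $|x|$. Here $z$ is the most popular value on $G$, and $G$ is only co-c.e.\ relative to $R$: deciding whether $C^R(y)\ge n$ requires the halting problem relativized to $R$, which is not $R$-decidable. Hence $z$ is only the limit of the $R$-computable sequence $z_s$ of most popular values on the shrinking approximations $G_s\supseteq G$, and pinning down that limit by the natural means --- the stabilization stage, or $|G|$, or the number of elements still to be enumerated out of $G_s$ --- costs on the order of $n$ bits, not $O(1)$ and not even $O(\log n)$. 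Your fallback remark (that an $O(\log n)$-bit description of $z$ would still give $(1/\poly(n))2^{n-\alpha}$) is conditionally correct, but the condition is precisely what is missing. Note also that the cheap fix of stopping the enumeration once $|G_s|\le 2|G|$, which needs only $\lfloor\log|G|\rfloor$ (about $\log n$ bits), fails: the up to $|G|$ spurious elements of $G_s\setminus G$ can absorb all $2^{n-m-c'}$ preimages of the current leader, so nothing forces the popular value on $G_s$ to be popular on $G$. To close the proof you need either a different way of selecting a popular output value with a constant-length description given $x$ and oracle $R$, or a different accounting that avoids describing $z$ at all; as written, the proposal does not supply either.
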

\begin{proof} Omitted from this extended abstract.~\qed
\if01
Sketch. We modify slightly the proof of Theorem~\ref{t:secondestimate}. First, Theorem~\ref{t:kolmextractor} can be modified to hold if the Kolmogorov complexity is relativized with $R$. Let $E$ be the Kolmogorov extractor obtained in this way. We take $z$ to be the string that has the largest number of $E$ preimages in the set $\{x\} \times \{y \in \zo^n \mid C^R(y) \geq n\}$, which is known to have size at least $2^{n-c}$ for some constant $c$. One can see that $C^R(z \mid x) \leq C(n) + O(1)$. The string $z$ has at least $2^{n-c-m}$ preimages as above, and their second component are all in $A_{x,\alpha, n}^R$, because all these preimages are bad for extraction.~\qed
\fi
\end{proof}

\section{Pairwise independent strings}
We show that if the $n$-bit strings $x_1, \ldots, x_t$ are pairwise $\alpha$-independent, then $t \leq \poly(n) 2^{\alpha}$. This upper bound is relatively tight, since there are sets with $(1/\poly(n)) \cdot 2^{\alpha}$ $n$-bit strings that are pairwise $\alpha$-independent.
\begin{theorem} 
\label{t:pairwiseubd}
For every sufficiently large $n$ and for every natural number $\alpha$, the following holds. If $x_1, \ldots, x_t$ are $n$-bit strings that are $\alpha$-independent, then $t < 2n^3 \cdot 2^\alpha$.
\end{theorem}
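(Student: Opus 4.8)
The plan is to charge each $x_i$ to a large, "essentially private" set of witnesses and then apply a packing bound. For each $i$ put $A_i:=A_{x_i,\alpha+1}=\{y\in\zo^n : C(y)-C(y\mid x_i)\ge\alpha+1\}$, the set of strings about which $x_i$ carries more than $\alpha$ bits of information. Two facts drive everything. First, pairwise $\alpha$-independence says exactly that $x_j\notin A_i$ for every $j\ne i$, while $x_i\in A_i$ as soon as $C(x_i)\ge\alpha+O(1)$; hence $A_i\cap\{x_1,\dots,x_t\}=\{x_i\}$ for all such $i$. Second, the elementary counting argument from the beginning of Section~3 gives $|A_i|\le C\cdot 2^{n-\alpha}$, whereas Theorem~\ref{t:firstestimate}, applied with parameter $\alpha+1$, gives $|A_i|\ge \frac{1}{\poly(n)}2^{n-\alpha}$ whenever $C(x_i)\ge\alpha+O(\log n)$. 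So the first step is to split the family: the strings with $C(x_i)\le\alpha+O(\log n)$ number at most $2^{\alpha+O(\log n)}\le\poly(n)2^\alpha$ and can be discarded, and it remains to bound the number of "heavy" strings with $C(x_i)>\alpha+O(\log n)$.

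For the heavy strings I would refine by complexity level: fix $m$, bound $t_m:=|\{i:C(x_i)=m\}|$, and sum over the $\le n+O(1)$ possible values of $m$. Within one level, pairwise $\alpha$-independence among the strings of complexity exactly $m$ says their witness sets — either the $A_i$ above, or the conditional-complexity balls $\{y:C(y\mid x_i)<m-\alpha\}$, for which Theorem~\ref{t:conditionalestimate} supplies a sharper size estimate — each contain the corresponding $x_i$ and none of the other $x_j$. If these sets were pairwise disjoint, the packing bound $t_m\cdot\frac{1}{\poly(n)}2^{n-\alpha}\le 2^n$ would give $t_m\le\poly(n)2^\alpha$, hence $t\le\poly(n)2^\alpha$; the clean constant $2n^3$ would then come from using the tight estimate of Theorem~\ref{t:conditionalestimate} in place of the $\poly(n)$-lossy estimate of Theorem~\ref{t:firstestimate}, together with the factor $n$ contributed by the complexity levels.

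The hard part is precisely the disjointness (or near-disjointness) of these witness sets, and this is where pairwise independence must be exploited beyond the consequence $A_i\cap\{x_1,\dots,x_t\}=\{x_i\}$: a priori two $\alpha$-independent strings $x_i,x_{i'}$ may carry information about many of the same witnesses, so the naive bound $\sum_i|A_i|\le 2^n$ is lossy by a factor $2^{n-\alpha}$ and only yields $t\le\poly(n)2^n$. I would attack this with the Symmetry of Information Theorem, part~(c): it converts "$x_i$ carries $>\alpha$ bits about $y$" into "$y$ carries $>\alpha-5\log n$ bits about $x_i$", so that a fixed witness $y$ lies in $A_i$ for at most $|A_{y,\,\alpha-O(\log n)}|\le\poly(n)2^{n-\alpha}$ indices $i$; combining this column bound with a second-moment (Cauchy--Schwarz) estimate of $\sum_i|A_i|$ and a bound on the pairwise overlaps $|A_i\cap A_{i'}|$ phrased in terms of the $\alpha$-independence of $x_i$ and $x_{i'}$ should force that, once $t$ exceeds $\poly(n)2^\alpha$, the overlaps become too small to be sustained, giving the contradiction. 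I expect bounding $|A_i\cap A_{i'}|$ to be the main technical obstacle, and I suspect the cleanest route is to carry the overlap estimate out one complexity level at a time, where the extra information $C(x_i)=m$ makes it sharp.
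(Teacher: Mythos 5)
Your reduction to a packing problem does not close, and the obstruction is not a technicality you can expect to push through: the quantitative fact your second-moment step needs is false. For Cauchy--Schwarz (or inclusion--exclusion) over the witness sets $A_i=A_{x_i,\alpha+1}$ to force $t\le\poly(n)2^{\alpha}$, you need the pairwise overlaps $|A_i\cap A_{i'}|$ to be on the order of $2^{n-2\alpha}\poly(n)$, i.e.\ roughly $|A_i|\cdot|A_{i'}|/2^{n}$; otherwise the denominator $\sum_{i,j}|A_i\cap A_j|$ is dominated by the off-diagonal terms and the inequality $2^{n}\ge(\sum_i|A_i|)^2/\sum_{i,j}|A_i\cap A_j|$ degenerates to a statement about the overlaps that carries no bound on $t$ at all. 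But two pairwise $\alpha$-independent strings can have witness sets that coincide up to a $\poly(n)$ factor: take $x_1=uv_1$, $x_2=uv_2$ with $u$ a random string of length $\alpha-O(\log n)$ and $v_1,v_2$ independent random strings; then $C(x_i)-C(x_i\mid x_j)\le\alpha$, yet every string of the form $y=u\,v_1(1{:}O(\log n))\,v_2(1{:}O(\log n))\,w$ with $w$ random lies in $A_{x_1,\alpha}\cap A_{x_2,\alpha}$, giving an overlap of size $2^{n-\alpha}/\poly(n)$ --- the same order as $|A_i|$ itself, and exponentially larger than $2^{n-2\alpha}$ once $\alpha\gg\log n$. (This is consistent with Lemma~\ref{t:intersectestimate} for $k=2$: pairwise $\alpha$-independence only yields $C(x_1x_2)\ge C(x_1)+C(x_2)-\alpha-O(\log n)$, so that lemma gives $|A_{x_1,\alpha}\cap A_{x_2,\alpha}|\le\poly(n)2^{n-2\alpha+\alpha}=\poly(n)2^{n-\alpha}$, and the example shows this is tight.) So "the overlaps become too small to be sustained" is exactly what fails; the combinatorial data you do establish --- each $A_i$ is large and contains $x_i$ but no other $x_j$ --- is consistent with sets that pairwise differ in only two points and hence places no bound on $t$. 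Two smaller issues: Theorem~\ref{t:conditionalestimate} requires $\alpha$ computable from $n$, which is not assumed here, and the constant $2n^3$ will not come out of any of these estimates.

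The statement is nonetheless true, and the argument that works is much more elementary and does not go through witness sets at all. After discarding the at most $2^{\alpha+3\log n}$ strings of complexity below $\alpha+3\log n$, associate to each remaining $x_i$ its shortest description $x_i^*$ and look at the prefix of $x_i^*$ of length $\beta=\alpha+3\log n$. If $x_i^*$ and $x_j^*$ shared this prefix, then given $x_j$ one could reconstruct $x_i^*$ (hence $x_i$) from $\beta$ and the remaining $C(x_i)-\beta$ bits of $x_i^*$, which yields $C(x_i)-C(x_i\mid x_j)\ge\beta-3\log n=\alpha$ and contradicts $\alpha$-independence. Hence the $\beta$-bit prefixes are distinct and $t'\le 2^{\beta}=n^{3}2^{\alpha}$. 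The lesson is that pairwise independence is exploited as a direct compression statement about the strings themselves (a pigeonhole on prefixes of minimal programs), not as a near-disjointness property of their dependency neighborhoods, which genuinely fails.
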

\begin{proof}
There are less than $2^{\alpha + 3 \log n}$ strings with Kolmogorov complexity less than $\alpha + 3 \log n$. We discard such strings from $x_1, \ldots, x_t$ and assume that $x_1, \ldots, x_{t'}$ are the strings that are left. Since $t < 2^{\alpha + 3 \log n} + t'$, we need to show that $t' \leq n^3 2^\alpha$.

For $1 \leq i \leq t'$, let $k_i = C(x_i)$ and let $x^*_i$ be the shortest description of $x_i$ as described in the Preliminaries.  Let $\beta = \alpha + 3 \log n$ (we assume that $\alpha \leq n - 3 \log n$, as otherwise the statement is trivial). We show that the prefixes of length $\beta$ of the strings  $x_1, \ldots, x_{t'}$ are all distinct, from which we conclude that $t' \leq 2^\beta = n^3 \cdot 2^\alpha$.

Suppose that there are two strings in the set that have equal prefixes of length $\beta$. W.l.o.g.\, we can assume that they are $x_1$ and $x_2$. Then
\[
C(x^*_1 \mid x^*_2) \leq (k_1 - \beta) + \log \beta + 2 \log \log \beta +O(1),
\]
because, given $x^*_2$, $x^*_1$ can be constructed from $\beta$ and the suffix of length $k_1 - \beta$ of $x^*_1$.
Note that 
\[
C(x^*_1 \mid x_2) \leq C(x^*_1 \mid x^*_2) + \log k_2 + 2 \log \log k_2 +O(1),
\]
because $x^*_2$ can be constructed from $x_2$ and $k_2$. Also note that $C(x_1 \mid x_2) \leq C(x^*_1 \mid x_2) + O(1)$.
Thus,
\[
C(x_1 \mid x_2) \leq C(x^*_1 \mid x^*_2) + \log k_2 + 2 \log \log k_2 + O(1).
\]
Therefore,
\[
\begin{array}{ll}
C(x_1) - C(x_1 \mid x_2) & \geq k_1 - (C(x^*_1 \mid x^*_2) + \log k_2 + 2 \log \log k_2 + O(1)) \\
& \geq k_1 - (k_1 - \beta) - \log \beta - 2 \log \log \beta - \log k_2 - 2 \log \log k_2 - O(1) \\
& \geq \beta - 3 \log n = \alpha,

\end{array}
\]
which is a contradiction.~\qed
\end{proof}

The next result shows that the upper bound in Theorem~\ref{t:pairwiseubd} is relatively tight. It relies on the well-known Tur\'{a}n's Theorem in Graph Theory~\cite{tur:j:indepset}, in the form due to Caro (unpublished) and Wei~\cite{wei:t:turan} (see~\cite[page 248]{juk:b:extremcombinat}): Let $G$ be a graph with $n$ vertices and let $d_i$ be the degree of the $i$-th vertex. Then $G$ contains an independent set of size at least $\sum \frac{1}{d_i+1}$.

\begin{theorem}
For every natural number $n$ and for every natural number $\alpha$ satisfying $5 \log n \leq \alpha \leq n$, there exists a constant $C$ and $t=\frac{1}{Cn^5}\cdot 2^\alpha$ $n$-bit strings $x_1, \ldots, x_t$ that are pairwise $\alpha$-independent.
\end{theorem}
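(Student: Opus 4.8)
The plan is to set up the \emph{$\alpha$-dependency graph} $G$ on vertex set $\zo^n$: join two distinct strings $x$ and $y$ by an edge exactly when $C(x) - C(x\mid y) > \alpha$ or $C(y) - C(y\mid x) > \alpha$ --- that is, precisely when the pair $\{x,y\}$ violates pairwise $\alpha$-independence. An independent set of $G$ is then exactly a pairwise $\alpha$-independent family of $n$-bit strings, so it is enough to produce an independent set of size $\frac{1}{Cn^5}\,2^\alpha$. For this I would apply the Caro--Wei form of Tur\'an's theorem quoted above: it delivers an independent set of size at least $\sum_{v}\frac{1}{\deg_G(v)+1}\geq\frac{2^n}{D+1}$ once $D$ is a common upper bound on the degrees of $G$. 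Thus everything reduces to bounding $D$.

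To bound the degree of a fixed $x\in\zo^n$, consider an arbitrary neighbour $y$. If $C(y) - C(y\mid x) > \alpha$, then $y\in A_{x,\alpha}$. If instead $C(x) - C(x\mid y) > \alpha$, then part (c) of the Symmetry of Information Theorem --- applicable since $|x|=|y|=n$ --- gives $C(y) - C(y\mid x) > \alpha - 5\log n$, so $y\in A_{x,\beta}$ with $\beta = \alpha - \lceil 5\log n\rceil$. In both cases every neighbour of $x$ lies in $A_{x,\beta}$, so, using the elementary counting bound $|A_{x,\beta}|\leq C_0\,2^{n-\beta}$ for an absolute constant $C_0$ established earlier in the paper,
\[
\deg_G(x)\;\leq\;C_0\,2^{\,n-\alpha+\lceil 5\log n\rceil}\;\leq\;2C_0\,n^5\,2^{\,n-\alpha}\;=:\;D,
\]
uniformly over $x\in\zo^n$.

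Plugging this into Caro--Wei, $G$ contains an independent set of size at least $2^n/(D+1)\geq 2^n/\bigl(2C_0 n^5 2^{\,n-\alpha}+1\bigr)$. Since $\alpha\leq n$ we have $2^{\,n-\alpha}\geq 1$, whence the denominator is at most $4C_0 n^5 2^{\,n-\alpha}$ and the independent set has size at least $\frac{1}{4C_0 n^5}\,2^\alpha$; taking $C=4C_0$ finishes the argument. Here the hypothesis $\alpha\geq 5\log n$ (and, if one insists on a genuinely nonempty family, ``$n$ large enough'' to swallow the constant) is exactly what keeps $\frac{1}{Cn^5}2^\alpha\geq 1$.

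The main --- and essentially the only --- nonroutine point is the degree bound, and within it the neighbours $y$ for which it is $y$ that holds more than $\alpha$ bits of information about $x$ rather than the other way round. A priori nothing bounds how many $y$ have $C(x\mid y)$ small, so one must transpose the information inequality through Symmetry of Information; the $5\log n$ price of that transposition is precisely what inflates the degree bound from $2^{\,n-\alpha}$ to $n^5\,2^{\,n-\alpha}$, and hence what produces the $n^5$ appearing in the statement. The remainder is routine bookkeeping around the Caro--Wei inequality and the counting bound for $|A_{x,\beta}|$.
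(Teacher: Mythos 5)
Your proof is correct and follows essentially the same route as the paper: a dependency graph on $\zo^n$, the Caro--Wei form of Tur\'an's theorem, a degree bound via $|A_{x,\beta}|$ with $\beta\approx\alpha-5\log n$, and one application of part (c) of Symmetry of Information. The only cosmetic difference is that the paper builds the $5\log n$ slack into the edge relation (joining $u,v$ when \emph{both} directions of dependency exceed $\alpha-5\log n$) and invokes symmetry of information when verifying that the resulting independent set is pairwise $\alpha$-independent, whereas you keep the edge relation at threshold $\alpha$ and spend the symmetry of information inside the degree bound; the two bookkeeping choices are interchangeable.
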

\begin{proof}
Let $\beta = \alpha - 5 \log n$. Consider the graph $G=(V,E)$, where $V = \zo^n$ and $(u,v) \in E$ iff $C(u) - C(u \mid v) \geq \beta$ and $C(v) - C(v \mid u) \geq \beta$. Note that for every $u \in \zo^n$, the degree of $u$ is bounded by $|A_{u, \beta}| \leq 2^{n-\beta + c}$, for some constant $c$. Therefore, by Tur\'{a}n's theorem, the graph $G$ contains an independent set $I$ of size at least $2^n \cdot \frac{1}{2^{n-\beta + c} + 1} \geq 2^{\beta - c-1} = \frac{1}{Cn^5}\cdot 2^{\alpha}$.
For any two elements $u, v$ in $I$, we have either $C(u) - C(u\mid v) < \beta$ or $C(v) - C(v \mid u) < \beta$. In the second case, by symmetry of information, $C(u) - C(u \mid v) < \beta + 5 \log n = \alpha$. It follows that the strings in $I$ are pairwise $\alpha$-independent.~\qed
\end{proof}
\section{Mutually independent strings}
In this section we show that the size of a mutually $\alpha$-independent tuple of   $n$-bit strings is bounded by $\poly(n) 2^{\alpha}$.

For $u \in \zo^n$, we define $D_\alpha(u) = \{x \in \zo^n \mid u \in A_{x,\alpha}\} = \{x \in \zo^n \mid C(u) - C(u \mid x) \geq \alpha \}$
and $d_{\alpha}(u) = |D_{\alpha}(u)|$. 

\begin{lemma} 
\label{l:degree}
For every natural number  $n$ sufficiently large, for every natural number $\alpha$, and for every $u \in \zo^n$, with $C(u) \geq \alpha + 12 \log n$,
\[
\frac{1}{2n^{12}} 2^{n-\alpha} \leq d_\alpha (u) \leq n^5 \cdot 2^{n-\alpha}.
\]
\end{lemma}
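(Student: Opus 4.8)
\medskip
The plan is to observe that $D_\alpha(u)$ and the family $A_{u,\cdot}$ studied earlier are essentially the same object seen from the other side: by part~(c) of the Symmetry of Information Theorem, the condition ``$x\in D_\alpha(u)$'' and the condition ``$x\in A_{u,\alpha}$'' agree up to an additive shift of $5\log n$ in the parameter. Both inequalities of the lemma then follow by transferring, respectively, the lower bound of Theorem~\ref{t:firstestimate} and the counting upper bound $|A_{u,\beta}|\le C\cdot 2^{\,n-\beta}$ established in Section~3.

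\medskip
\textbf{Lower bound.} First I would check the inclusion $A_{u,\,\alpha+5\log n}\subseteq D_\alpha(u)$. Indeed, if $x\in\zo^n$ satisfies $C(x)-C(x\mid u)\ge\alpha+5\log n$, then part~(c) of the Symmetry of Information Theorem applied to the pair $(u,x)$ gives $C(u)-C(u\mid x)\ge\big(C(x)-C(x\mid u)\big)-5\log n\ge\alpha$, so $x\in D_\alpha(u)$. Now apply Theorem~\ref{t:firstestimate} to the string $u$ with dependency parameter $\alpha':=\alpha+5\log n$; the hypothesis $C(u)\ge\alpha+12\log n=\alpha'+7\log n$ is exactly what that theorem requires, and it yields $|A_{u,\alpha'}|\ge\frac{1}{2n^7}2^{\,n-\alpha'}=\frac{1}{2n^{12}}2^{\,n-\alpha}$. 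Hence $d_\alpha(u)=|D_\alpha(u)|\ge|A_{u,\alpha'}|\ge\frac{1}{2n^{12}}2^{\,n-\alpha}$. (Equivalently, one can unfold Theorem~\ref{t:firstestimate} and produce these strings explicitly as $x=u^{*}_{\alpha'}\,w$, the length-$\alpha'$ prefix of the shortest program of $u$ padded by an arbitrary $w\in\zo^{\,n-\alpha'}$; from such an $x$ one recovers $u$ with $C(u\mid x)\le (C(u)-\alpha')+O(\log n)$, giving $C(u)-C(u\mid x)\ge\alpha$.)

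\medskip
\textbf{Upper bound.} Symmetrically, $D_\alpha(u)\subseteq A_{u,\,\alpha-5\log n}$: if $C(u)-C(u\mid x)\ge\alpha$, then part~(c) of the Symmetry of Information Theorem applied to the pair $(x,u)$ gives $C(x)-C(x\mid u)\ge\alpha-5\log n$. If $\alpha<5\log n$ there is nothing to prove, since $d_\alpha(u)\le 2^n\le n^5\,2^{\,n-\alpha}$ trivially; otherwise $\beta:=\alpha-5\log n\ge 0$ and the counting bound from Section~3 gives $d_\alpha(u)\le|A_{u,\beta}|\le C\cdot 2^{\,n-\beta}=C\,n^5\,2^{\,n-\alpha}$, which is the asserted bound (the constant $C$, inherited from $C(y)\le n+O(1)$, being absorbed into the polynomial factor).

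\medskip
\textbf{Main difficulty.} There is no deep obstacle: once Symmetry of Information is available the lemma is bookkeeping. The thing to get right is the arithmetic of the logarithmic slacks — the hypothesis reads $C(u)\ge\alpha+12\log n$ precisely because $12=7+5$, the $7\log n$ demanded by Theorem~\ref{t:firstestimate} plus the $5\log n$ lost when passing between $D_\alpha$ and $A_{u,\cdot}$ — together with dispatching the degenerate parameter ranges ($\alpha$ within $O(\log n)$ of $n$, or $\alpha$ below $5\log n$) where the claimed bounds are either trivial or fall below $1$. It is worth emphasizing that the lower bound holds for \emph{every} $u$ of sufficiently high complexity, not merely on average; this comes for free here because Theorem~\ref{t:firstestimate} (equivalently, the explicit prefix construction) already has that uniform character.
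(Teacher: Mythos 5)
Your proposal is correct and follows essentially the same route as the paper: the lower bound is obtained exactly as in the paper by the inclusion $A_{u,\alpha+5\log n}\subseteq D_\alpha(u)$ via part (c) of Symmetry of Information followed by Theorem~\ref{t:firstestimate} with parameter $\alpha+5\log n$ (whence the $12=7+5$ bookkeeping), and your upper bound, phrased as $D_\alpha(u)\subseteq A_{u,\alpha-5\log n}$ plus the generic counting bound on $|A_{u,\beta}|$, is the same computation the paper performs directly by deducing $C(x\mid u)\le n-\alpha+5\log n$ and counting programs. The only cosmetic differences are your explicit treatment of the degenerate range $\alpha<5\log n$ and the constant absorbed into the polynomial factor, neither of which changes the substance.
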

\begin{proof}
For every $x \in A_{u, \alpha + 5 \log n}$,
\[
C(x) - C(x \mid u) \geq \alpha + 5 \log n
\]
which by symmetry of information implies
\[
C(u) - C(u \mid x) \geq \alpha + 5 \log n - 5 \log n = \alpha,
\]
and therefore, $u \in A_{x, \alpha}$. Thus
\[
d_\alpha(u) \geq |A_{u,\alpha + 5 \log n}| \geq \frac{1}{2n^7} 2^{n-\alpha - 5\log n} = \frac{1}{2n^{12}} 2^{n-\alpha}.
\]
For every $u \in \zo^n$,
\[
\begin{array}{ll}
x \in D_{u, \alpha} & \Rightarrow u \in A_{x, \alpha} \\
 & \Rightarrow C(u) - C(u \mid x) \geq \alpha \\

& \Rightarrow C(x) - C(x \mid u) \geq \alpha - 5 \log n \\

& \Rightarrow C(x  \mid u) \leq n - \alpha + 5 \log n.
\end{array}
\]
Thus, $d_\alpha(u) \leq |\{x \in \zo^n \mid C(x \mid u) \leq n - \alpha + 5 \log n\}| \leq n^5 \cdot 2^{n-\alpha}$.
~\qed \end{proof}

Since for any string $x$ and natural number $\alpha$, $|A_{x, \alpha}| \leq 2^{n - \alpha - c}$, for some constant $c$, it follows that we need
at least $T=2^{\alpha - c}$ strings $x_1, \ldots, x_T$ to ``$\alpha$-cover'' the set of $n$-bit strings, in the sense that for each $n$-bit string $y$, there exists $x_i$, $i \in [T]$ such that $y$ is $\alpha$-dependent with $x_i$. The next theorem shows that $\poly(n) 2^{\alpha}$ strings are enough to $\alpha$-cover the set of $n$-bit strings. 
\begin{theorem}
\label{t:sizecover}
For every natural number $n$ sufficiently large, for every natural number $\alpha$, there exists a set $B \subseteq \zo^n$ of size $\poly(n) 2^{\alpha}$  such that each string in 
$\zo^n$ is $\alpha$-dependent with some string in $B$, \ie, $\zo^n = \bigcup_{x \in B} A_{x, \alpha}$. More precisely the size of $B$ is bounded by $(2n^{13} + n^{12})\cdot 2^{\alpha}$.
\end{theorem}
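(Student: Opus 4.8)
The plan is to split $\zo^n$ into the \emph{light} strings, those of low Kolmogorov complexity, which we absorb into $B$ by brute force, and the \emph{heavy} strings, which we cover by a greedy set-cover argument fueled by the lower bound on $d_\alpha(\cdot)$ supplied by Lemma~\ref{l:degree}.

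First I would set $\gamma = \alpha + 12\log n$ and let $L = \{u \in \zo^n \mid C(u) < \gamma\}$ and $H = \zo^n \setminus L$. Since there are fewer than $2^\gamma = n^{12}2^\alpha$ strings of Kolmogorov complexity below $\gamma$, we have $|L| < n^{12}2^\alpha$. I put every string of $L$ into $B$; because a string is $\alpha$-dependent with itself by definition, each $u \in L$ belongs to $A_{u,\alpha}$, so the light strings are covered at a cost of at most $n^{12}2^\alpha$ elements of $B$.

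Next I would cover $H$ greedily. By Lemma~\ref{l:degree}, every $u \in H$ satisfies $d_\alpha(u) = |\{x \in \zo^n \mid u \in A_{x,\alpha}\}| \geq \frac{1}{2n^{12}}2^{n-\alpha}$. Starting from $S_0 = H$, at step $i$ I double count incidences:
\[
\sum_{x \in \zo^n} |A_{x,\alpha} \cap S_{i-1}| \;=\; \sum_{u \in S_{i-1}} d_\alpha(u) \;\geq\; |S_{i-1}| \cdot \frac{1}{2n^{12}}\, 2^{n-\alpha},
\]
so, averaging over the $2^n$ choices of $x$, there is an $x_i \in \zo^n$ with $|A_{x_i,\alpha} \cap S_{i-1}| \geq |S_{i-1}|/(2n^{12}2^\alpha)$. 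I add $x_i$ to $B$ and set $S_i = S_{i-1} \setminus A_{x_i,\alpha}$, so that $|S_i| \leq |S_{i-1}|\,(1 - 1/(2n^{12}2^\alpha))$. Iterating, $|S_t| \leq 2^n\,(1 - 1/(2n^{12}2^\alpha))^t \leq 2^n\, e^{-t/(2n^{12}2^\alpha)}$, which drops below $1$ once $t > 2n^{13}(\ln 2)\,2^\alpha$; taking $t = \lceil 2n^{13}(\ln 2)\,2^\alpha \rceil \leq 2n^{13}2^\alpha$ for $n$ large forces $S_t = \emptyset$, i.e.\ $H \subseteq \bigcup_{i=1}^t A_{x_i,\alpha}$. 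Altogether $|B| \leq 2n^{13}2^\alpha + n^{12}2^\alpha = (2n^{13}+n^{12})2^\alpha$ and $\zo^n = \bigcup_{x \in B}A_{x,\alpha}$.

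I do not anticipate a genuine obstacle: the argument rests entirely on Lemma~\ref{l:degree} together with the standard analysis of greedy set cover. The one point needing care is the bookkeeping of the polynomial factors, so that the final bound is the claimed $(2n^{13}+n^{12})2^\alpha$ and not a constant factor more; in particular one uses that $2\ln 2 < 2$ to absorb the rounding in the choice of $t$ for all sufficiently large $n$. (Choosing the $x_i$ uniformly at random would yield the same bound up to constants, but the greedy selection gives the cleanest constant.)
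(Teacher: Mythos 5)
Your proof is correct and follows essentially the same approach as the paper: the same decomposition into low-complexity strings (absorbed into $B$ directly, using the convention that a string is $\alpha$-dependent with itself) and strings with $C(u) \geq \alpha + 12\log n$, covered via the degree lower bound of Lemma~\ref{l:degree}. The only difference is the cover-extraction step --- the paper picks the $T = 2n^{13}2^\alpha$ strings uniformly at random and applies a union bound ($2^n e^{-n} < 1$), whereas you run the deterministic greedy set-cover iteration; the two are interchangeable here and yield the same bound $(2n^{13}+n^{12})2^\alpha$.
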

\begin{proof}
(a) We choose $T= 2n^{13}2^\alpha$ strings $x_1, \ldots, x_T$, uniformly at random in $\zo^n$.
The probability that a fix $u$ with $C(u) \geq \alpha + 12 \log n$ does not belong to any of the sets $A_{x_i, \alpha}$, for $i \in [T]$, is at most
$(1 - \frac{1}{2n^{12} 2^\alpha})^T < e^{-n}$ (by Lemma~\ref{l:degree}). By the union bound, the probability that there exists
$u \in \zo^n$ with $C(u) \geq \alpha + 12 \log n$, that does not belong to any of the sets  $A_{x_i, \alpha}$, for $i \in [T]$, is bounded by $2^n \cdot e^{-n} < 1$. Therefore there are strings $x_1, \ldots, x_T$ in $\zo^n$ such that $\bigcup A_{x_i, \alpha}$ 
contains all the strings $u \in \zon$ having $C(u) \geq \alpha + 12 \log n$.  By adding to $x_1, \ldots, x_T$, the strings that have Kolmogorov complexity $< \alpha + 12 \log n$, we obtain the set $B$ that $\alpha$-covers the entire $\zo^n$.~\qed \end{proof}

\if01
PLAN TO ESTIMATE THE SIZE of a maximal set of independent strings: start with the set $B$ from above. Let $x$ be in $B$.
The number of independent strings in $A_{x, \alpha}$ should be around $2^{n/\alpha}$, because if $z_1, \ldots, z_T$ are such strings they each describe independently about $\alpha$ bits of $x$ (?). Actually we have $x$ is in the intersection of the sets $A_{z_i, \alpha}$, and such an intersection becomes empty if $T >> 2^{n/\alpha}$. 
\fi

To estimate the size of a mutually $\alpha$-independent tuple of strings, we need the following lemma. 

\begin{lemma}
\label{t:intersectestimate}
Let $\alpha, \beta \in \nat$ and let the tuple of $n$-bit strings $(x_1, x_2, \ldots, x_k)$  satisfy
$C(x_1 \ldots x_k) \geq C(x_1) + \ldots + C(x_k) - \beta$.
 Then there exists a constant $d$ such that
\[
| A_{x_1,\alpha} \cap \ldots \cap A_{x_k,\alpha}| \leq d n^{7k+5} k^3 2^{n - k \alpha + \beta}.
\]
\end{lemma}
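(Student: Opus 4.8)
The goal is an upper bound on the number of strings $y$ that lie simultaneously in all of $A_{x_1,\alpha},\dots,A_{x_k,\alpha}$, under the hypothesis that $x_1,\dots,x_k$ are jointly "spread out" (their joint complexity is within $\beta$ of the sum of individual complexities). The natural strategy is a counting/compression argument: if $y$ is $\alpha$-dependent with each $x_i$, then each $x_i$ supplies roughly $\alpha$ bits of information about $y$, and if the $x_i$ are (nearly) mutually independent these contributions should be "almost disjoint", so that the tuple $(x_1,\dots,x_k)$ collectively supplies about $k\alpha$ bits of information about $y$. Hence $C(y\mid x_1\ldots x_k)\lesssim n-k\alpha$ (with logarithmic slack), and the number of such $y$ is at most about $2^{n-k\alpha}$ times a correction factor coming from the $\beta$ slack and the $O(\log n)$ fudge terms accumulated $k$ times — which is exactly the claimed bound $d\,n^{7k+5}k^3\,2^{n-k\alpha+\beta}$.

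**Key steps, in order.** First I would fix $y\in A_{x_1,\alpha}\cap\cdots\cap A_{x_k,\alpha}$ and use $y\in A_{x_i,\alpha}$, i.e. $C(y)-C(y\mid x_i)\ge\alpha$, together with the symmetry of information (part (c) and part (a) of the stated theorem) to extract information about $y$ from the \emph{joint} string $x_1\ldots x_k$ rather than from the individual $x_i$'s. The cleanest route is probably to argue by induction on $k$: if $C(y\mid x_1\ldots x_{k-1})\le n-(k-1)\alpha+(\text{slack})$, then one shows $C(y\mid x_1\ldots x_k)\le C(y\mid x_1\ldots x_{k-1})-\alpha+(\text{slack})$, using that $x_k$ is $\alpha$-dependent with $y$ \emph{even conditioned on} $x_1\ldots x_{k-1}$; this last conditional independence is where the hypothesis $C(x_1\ldots x_k)\ge\sum C(x_i)-\beta$ is consumed, via symmetry of information applied to the pair $(x_1\ldots x_{k-1}, x_k)$ and a comparison of $C(x_k)$ with $C(x_k\mid x_1\ldots x_{k-1})$ — the joint-complexity hypothesis forces $C(x_k\mid x_1\ldots x_{k-1})\ge C(x_k)-\beta-O(\log n)$, so conditioning on the earlier strings costs $x_k$ at most $\beta+O(\log n)$ bits of its information content, and therefore does not destroy its $\alpha$-dependency with $y$ except by at most that amount. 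Carrying this through all $k$ stages, the $\beta$ is paid once (not $k$ times), while each stage contributes an $O(\log n)$ error, giving a total slack of the form $\beta + O(k\log n)$, i.e. a multiplicative factor $n^{O(k)}$; tracking the constants in the stated symmetry-of-information bounds ($2\log C(\cdot)$ and $4\log\log C(\cdot)$ terms) yields the specific exponent $7k+5$ and the $k^3$ factor. Finally, having shown every such $y$ satisfies $C(y\mid x_1\ldots x_k)\le n-k\alpha+\beta+O(k\log n)$, the bound $|\{y\in\zo^n : C(y\mid w)\le \ell\}|< 2^{\ell+1}$ finishes the count.

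**Main obstacle.** The delicate point is the inductive step controlling $C(y\mid x_1\ldots x_k)$ in terms of $C(y\mid x_1\ldots x_{k-1})$: one must convert "$x_k$ has $\alpha$ information about $y$" (an unconditional statement) into "$x_k$ has $\approx\alpha$ information about $y$ given $x_1\ldots x_{k-1}$", and this requires simultaneously (i) a conditional version of symmetry of information, (ii) the fact that the prefix string $x_1\ldots x_{k-1}$ has not already "used up" the part of $x_k$ that is informative about $y$ — which is precisely what near-mutual-independence of the $x_i$ buys us — and (iii) careful bookkeeping so that the $O(\log n)$ terms from the $k$ applications of symmetry of information, plus the one-time $\beta$, assemble into exactly the claimed $n^{7k+5}k^3 2^{n-k\alpha+\beta}$ rather than something weaker like $2^{k\beta}$. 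I expect the bulk of the work to be this constant-chasing through iterated symmetry of information; the conceptual content — independence of sources makes informational contributions additive — is straightforward, but making it quantitatively tight over $k$ rounds is the technical heart.
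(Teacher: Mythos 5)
Your plan is sound in outline but takes a genuinely different route from the paper. The paper avoids induction and conditional information entirely: fixing $u$ in the intersection, it applies symmetry of information once per $i$ to turn $C(u)-C(u\mid x_i)\ge\alpha$ into $C(x_i)-C(x_i\mid u)\ge\alpha-5\log n$, so each $x_i$ has a description $p_i$ of length at most $C(x_i)-\alpha+5\log n$ given $u$; concatenating the $p_i$ self-delimitingly gives $C(x_1\ldots x_k\mid u)\le \sum C(x_i)-k\alpha+7k\log n+O(1)$, the hypothesis is invoked exactly once to replace $\sum C(x_i)$ by $C(x_1\ldots x_k)+\beta$, and a single final application of symmetry of information to the pair $(u,\,x_1\ldots x_k)$ yields $C(u\mid x_1\ldots x_k)\le n-k\alpha+\beta+(7k+5)\log n+3\log k+O(1)$, whence the count. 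That is why $\beta$ appears once and the polynomial factor is exactly $n^{7k+5}k^3$: only the plain, unconditional symmetry of information stated in the appendix is used, $k+1$ times, and all the compression of the $x_i$'s happens simultaneously rather than sequentially.

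Your inductive scheme can be made to work, but it is heavier, and two points in it need repair. First, the step $C(y\mid x_1\ldots x_i)\le C(y\mid x_1\ldots x_{i-1})-\alpha+(\mathrm{loss}_i)+O(\log n)$ is the inequality $I(x_i:y\mid x_1\ldots x_{i-1})\ge I(x_i:y)-I(x_i:x_1\ldots x_{i-1})-O(\log n)$, which requires a \emph{conditional} form of symmetry of information (equivalently, nonnegativity of $I(x_i:x_1\ldots x_{i-1}\mid y)$ together with the two expansions of the triple mutual information); the paper never states or proves a conditional version, so you would have to supply it. Second, ``$\beta$ is paid once'' does not follow from the per-stage bound you cite, $C(x_i\mid x_1\ldots x_{i-1})\ge C(x_i)-\beta-O(\log n)$, which naively accumulates to $k\beta$; you need the telescoped chain-rule identity $\sum_i I(x_i:x_1\ldots x_{i-1})\le \sum_i C(x_i)-C(x_1\ldots x_k)+O(k\log n)\le\beta+O(k\log n)$. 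With these two repairs your argument goes through and gives a bound of the same shape $n^{O(k)}2^{n-k\alpha+\beta}$, though not necessarily with the exact exponents $7k+5$ and $k^3$ of the lemma as stated.
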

\begin{proof}
Let $u \in \zo^n$ be a string in $A_{x_1,\alpha} \cap \ldots \cap A_{x_k,\alpha}$. Then
$C(u) - C(u \mid x_i) \geq \alpha$, for all $i \in [k]$. Therefore, by symmetry of information,
$C(x_i) - C(x_i \mid u) \geq \alpha - 5 \log n$, for all $i \in [k]$. It follows that for every $i \in [k]$, there exists a string $p_i$ of length $|p_i| \leq C(x_i) - \alpha + 5 \log n$ such that, given $u$, is a descriptor of $x_i$ (\ie, $U(p_i, u) = x_i$). The strings $p_1, \ldots, p_k$ describe the string $x_1 x_2 \ldots x_k$, given $u$, and therefore
\[
\begin{array}{ll}
C(x_1 x_2 \ldots x_k \mid u) &\leq |p_1| + \ldots + |p_k| + 2 \log |p_1| + \ldots + 2 \log |p_k| + O(1) \\
& \leq C(x_1) + \ldots + C(x_k) - k \alpha + 5k \log n + 2 \log |p_1| + \ldots + 2 \log |p_k| + O(1) \\
& \leq C(x_1) + \ldots + C(x_k) - k \alpha + 7k \log n + O(1) \\
&  \leq C(x_1 \ldots x_k) + \beta - k \alpha + 7 k \log n + O(1).
\end{array}
\]
So,
\[
C(x_1 \ldots x_k) - C(x_1 \ldots x_k \mid u) \geq -(\beta - k \alpha +7k \log n + O(1)).
\]

By symmetry of information, 
\[
\begin{array}{ll}
C(u) - C (u \mid x_1 \ldots x_k) & \geq C(x_1 \ldots x_k) - C(x_1 \ldots x_k \mid u) - 2 \log C(u) - 2 \log C(x_1 \ldots x_k u) \\
& \quad \quad  - 4 \log \log C(x_1 \ldots x_k u) - O(1).
\end{array}
\]

It follows that
\[
C(u) - C(u \mid x_1 \ldots x_k) \geq - (\beta  - k \alpha  + 7 k \log n ) - 5 \log n - 3 \log k
\]
and thus
\[
\begin{array}{ll} 
C(u \mid x_1 \ldots x_k) & \leq C(u) + \beta - k \alpha + (7k+5) \log n + 3 \log k \\
& \leq n + \beta - k \alpha + (7k+5) \log n + 3 \log k + O(1).
\end{array}
\]
Therefore,
\[
A_{x_1, \alpha} \cap \ldots \cap A_{x_k, \alpha} \subseteq \{u \in \zo^n \mid C(u \mid x_1 \ldots x_k) \leq n+\beta - k \alpha +(7k+5) \log n + 3 \log k +O(1)\}.
\]
The conclusion follows.
~\qed \end{proof}

Finally, we prove the upper bound for the size of a mutually $\alpha$-independent tuple of $n$-bit strings. 
\begin{theorem}
For every sufficiently large natural number $n$ the following holds. Let $\alpha$ be an integer such that $\alpha > 7 \log n + 6$. Let $(x_1, \ldots, x_t)$ be a mutually $\alpha$-independent tuple of $n$-bit strings. Then $t \leq \poly(n) 2^{\alpha}$.
\end{theorem}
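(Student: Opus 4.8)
The plan is to reduce the problem to a statement about the size of a ``cluster'' of the $x_i$'s around a single string, and then to control that cluster size with Lemma~\ref{t:intersectestimate}, using the covering set of Theorem~\ref{t:sizecover} and a pigeonhole step to produce the cluster. We may assume $\alpha<n$ (otherwise $\poly(n)2^\alpha\ge 2^n$, which already bounds the number of distinct $n$-bit strings, and one checks separately that high-complexity entries cannot repeat), and that $t$ exceeds any fixed polynomial in $n$. A routine preprocessing also lets us assume that the $x_i$ are pairwise distinct with $C(x_i)\ge\alpha+12\log n$ for all $i$: there are fewer than $n^{12}2^{\alpha}$ strings of smaller complexity, and an entry whose complexity exceeds $\alpha+O(\log(tn))$ cannot occur twice (place two equal copies side by side in a permuted concatenation and combine subadditivity with mutual $\alpha$-independence); discarding the offending entries changes $t$ by at most $\poly(n)2^\alpha$ and the defect by $O(\log n)$, so it suffices to bound the size of the reduced tuple.

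Now fix, by Theorem~\ref{t:sizecover}, a set $B\subseteq\zon$ with $|B|\le(2n^{13}+n^{12})2^{\alpha}$ and $\zon=\bigcup_{b\in B}A_{b,\alpha}$. For each $i$ pick $b_i\in B$ with $x_i\in A_{b_i,\alpha}$, and by pigeonhole choose $b\in B$ so that $I=\{i:b_i=b\}$ has $|I|\ge t/|B|$; relabel so that $I=\{1,\dots,m\}$ with $m\ge t/|B|$. For every $i\le m$ we then have $C(x_i)-C(x_i\mid b)\ge\alpha$, so by part~(c) of the Symmetry of Information Theorem $C(b)-C(b\mid x_i)\ge\alpha-5\log n$, i.e.\ $b\in A_{x_i,\alpha-5\log n}$. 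Hence
\[
b\in A_{x_1,\alpha-5\log n}\cap\cdots\cap A_{x_m,\alpha-5\log n},
\]
so this intersection is nonempty.

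Apply Lemma~\ref{t:intersectestimate} to the tuple $(x_1,\dots,x_m)$ with dependency parameter $\alpha-5\log n$ and with $\beta$ a bound on the mutual-independence defect of this sub-tuple, that is, any $\beta$ with $C(x_1\cdots x_m)\ge C(x_1)+\cdots+C(x_m)-\beta$. Since the displayed intersection is nonempty, the upper bound of the lemma must be at least $1$; taking logarithms yields, for a constant $c$,
\[
m(\alpha-12\log n)\le n+\beta+c\log(mn).
\]
Provided $\beta=\alpha+O(\log n)$ and $\alpha$ exceeds $12\log n$ by a positive amount (this is where one uses the hypothesis on $\alpha$, after tightening the constants in Lemma~\ref{t:intersectestimate} or strengthening it slightly), this forces $m=O(n+\alpha)=O(n)$ because $\alpha<n$. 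Therefore $t\le m\,|B|\le O(n)\cdot\poly(n)2^{\alpha}=\poly(n)2^{\alpha}$, which is the assertion. One could alternatively use Lemma~\ref{l:degree} in place of Theorem~\ref{t:sizecover} to produce the cluster, with the same conclusion.

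The step that needs care — and which I expect to be the main obstacle — is the bound $\beta=\alpha+O(\log n)$ on the defect of the cluster $(x_1,\dots,x_m)$. From mutual $\alpha$-independence of the full tuple applied to the permutation that lists $1,\dots,m$ first, a single use of subadditivity gives $C(x_1\cdots x_m)\ge\sum_{i\le m}C(x_i)-\alpha-\bigl(C(x_{m+1}\cdots x_t)-\sum_{i>m}C(x_i)\bigr)-O(\log(tn))$, so what is needed is that the tail concatenation $x_{m+1}\cdots x_t$ is no more complex than the sum of the complexities of its blocks, up to an error independent of $t$. A crude application of subadditivity to the tail only bounds this error by $\Omega((t-m)\log n)$, which is far too large since $m$ is only a $\poly(n)2^{-\alpha}$ fraction of $t$; the delicate point is to exploit that mutual $\alpha$-independence holds for \emph{all} permutations of $(x_1,\dots,x_t)$ in order to keep the slack contributed by the tail at $O(\log n)$, and hence bring $\beta$ down to $\alpha+O(\log n)$. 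Once this is in place the rest of the argument is the bookkeeping sketched above.
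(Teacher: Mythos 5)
Your architecture coincides with the paper's: cover $\zo^n$ by the set $B$ of Theorem~\ref{t:sizecover}, pigeonhole to extract a cluster $x_1,\dots,x_m$ with $m\ge t/|B|$ all dependent on a single $b\in B$, flip the dependency via symmetry of information so that $b\in A_{x_1,\cdot}\cap\dots\cap A_{x_m,\cdot}$, and then use the nonemptiness of that intersection against Lemma~\ref{t:intersectestimate} to force $m=\poly(n)$. The genuine gap is that you never prove the statement on which the whole argument hinges: the bound $\beta$ on the independence defect $C(x_1)+\dots+C(x_m)-C(x_1\cdots x_m)$ of the cluster. You explicitly defer it as ``the delicate point,'' to be handled by ``exploiting that mutual $\alpha$-independence holds for all permutations,'' without saying how. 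A sketch that isolates its central obstacle and leaves it open is not a proof.

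Two remarks on that missing step. First, you set yourself an unnecessarily stringent target: $\beta=\alpha+O(\log n)$ is not needed. Since Lemma~\ref{t:intersectestimate} gives a bound of the form $dn^{7m+5}m^3 2^{n-m\alpha'+\beta}$ and the contradiction is drawn at $m\ge n$ with $\alpha>7\log n+6$, the term $2^{-m\alpha'}$ is at most $2^{-7n\log n - 6n}$ and can absorb an additive $O(n)$ in $\beta$; accordingly, the paper's Claim only establishes $\beta\le\alpha+4\log(nt/2)\le\alpha+O(n)$ (using $t\le 2^n$), by a short contradiction: if the cluster had defect exceeding $\beta$, then splitting $C(x_1\cdots x_t)\le C(x_1\cdots x_k)+C(x_{k+1}\cdots x_t)+O(\log n)$ and bounding the tail by subadditivity would violate mutual $\alpha$-independence for the identity permutation. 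Your instinct that the tail's slack is the sore point is fair --- the overhead of naive iterated subadditivity on $x_{k+1}\cdots x_t$ is of order $(t-k)\log n$, and even the paper's one-line tail bound is stated more tightly than that --- but you neither resolve it nor notice that the tolerance on $\beta$ is loose enough that only the $t$-dependence, not the $n$-dependence, of the slack is critical. Finally, a smaller mismatch: you cover with $A_{b,\alpha}$ and so, after flipping, must apply the lemma with parameter $\alpha-5\log n$, which is why you end up needing $\alpha>12\log n$ ``after tightening constants''; the paper instead covers with $A_{y,\alpha+5\log n}$ (still only $\poly(n)2^\alpha$ covering strings) precisely so that the lemma can be applied with parameter $\alpha$ itself, matching the stated hypothesis $\alpha>7\log n+6$.
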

\begin{proof}
By Theorem~\ref{t:sizecover},  there exists a set $B$ of size at most $\poly(n) 2^{\alpha + 5 \log n}$ such that every $n$-bit string $x$  is in  $A_{y, \alpha + 5 \log n}$, for some $y \in B$. We view $\{x_1, \ldots, x_t\}$ as a multiset. Let $y$ be the string  in $B$ that achieves the largest size of multiset $A_{y, \alpha + 5 \log n} \cap \{x_1, \ldots, x_t\}$ (we take every common element with the multiplicity in $\{x_1, \ldots, x_t\}$). Let $k$ be the size of the above intersection. Clearly, $k \geq t/|B|$. We will show that $k = \poly(n)$, and, therefore, 
$t \leq k \cdot |B| = \poly(n) \cdot 2^{\alpha}$.

 Without loss of generality suppose $A_{y, \alpha + 5 \log n} \cap \{x_1, \ldots, x_t\} = \{x_1, \ldots, x_k\}$ (as multisets).
Since, for every $i \in [k]$, $C(x_i) - C(x_i \mid y) \geq \alpha + 5 \log n$, by symmetry of information, it follows that $C(y) - C(y \mid x_i) \geq \alpha$. Thus
$y \in A_{x_1, \alpha} \cap \ldots \cap A_{x_k, \alpha}$. In particular, $A_{x_1, \alpha} \cap \ldots \cap A_{x_k, \alpha}$ is not empty. We want to use Lemma~\ref{t:intersectestimate} but before we need to estimate the difference between $C(x_1 \ldots x_k)$ and $C(x_1) + \ldots + C(x_k)$. 
\begin{claim}
$C (x_1 \ldots x_k) \geq C(x_1) + \ldots + C(x_k) - \beta$, where $\beta = \alpha + 4 \log (nt/2)$.
\end{claim}
\emph{Proof of claim.} Suppose $C(x_1 \ldots x_k) < C(x_1) + \ldots + C(x_k) - \beta$.
Note that
\[
\begin{array}{ll}
C(x_1 \ldots x_t) & \leq C(x_1 \ldots x_k) + C(x_{k+1} \ldots x_t) + 2 \log C(x_1 \ldots x_k) + O(1)\\
& \leq C(x_1) + \ldots C(x_k) + C(x_{k+1} \ldots x_t) - \beta + 2 \log kn + O(1).

\end{array}
\]
Since $C(x_1 \ldots x_t) \geq C(x_1) + \ldots + C(x_t) - \alpha$, it follows that
\[
C(x_{k+1}) + \ldots + C(x_t) - \alpha \leq C(x_{k+1} \ldots x_t)  - \beta + 2 \log kn + O(1).
\]
On the other hand,
\[
C(x_{k+1} \ldots x_t) \leq C(x_{k+1}) + \ldots + C(x_t) + 2 \log (t-k)n + O(1).
\]
It follows that
\[
\beta - \alpha \leq 2 \log kn + 2 \log (t-k)n + O(1).
\]
However, from the definition of $\beta$,
\[
\beta - \alpha = 4 \log (nt/2) > 2\log kn + 2\log (t-k)n + O(1).
\]
The contradiction proves the claim.~\qed
\smallskip

Now, by Lemma~\ref{t:intersectestimate},  

\[
\begin{array}{ll}
|A_{x_1, \alpha} \cap \ldots \cap A_{x_k, \alpha}| & \leq d n^{7k+5} k^3 2^{n - k \alpha + \beta} \\
& = d n^{7k+5} k^3 2^{n - (k-1)\alpha + 4 \log t + 4 \log (n/2) }\\
&\leq d n^{7k+5} k^3 2^{5n - (k-1)\alpha  + 4 \log (n/2)}, 
\end{array}
\]
where in the last line we used the fact that $t \leq 2^n$.

 It can be checked that if $\alpha > 7 \log n + 6$ and $k \geq n$, then the above upper bound is less than 1, which is a contradiction. It follows that $k < n$.~\qed \end{proof}
\section{Final remarks}
This paper provides tight bounds (within a polynomial factor) for the size of $A_{x, \alpha}$ (the set of $n$-bit strings that have $\alpha$-dependency with $x$) and for the size of sets of $n$-bit strings that are pairwise $\alpha$-independent. 

The size of a mutually $\alpha$-independent tuple of $n$-bit strings is at most $\poly(n) 2^{\alpha}$.  We do not know how tight this bound is and leave this issue as an interesting open problem.

We have recently learned about the paper~\cite{cha-lyu-ti-shen:j:kindep}, which obtains similar results regarding the size of sets of pairwise and $k$-independence strings, for a notion of independence that is suitable for strings with large Kolmogorov complexity.


\begin{thebibliography}{10}
\providecommand{\url}[1]{\texttt{#1}}
\providecommand{\urlprefix}{URL }

\bibitem{bar-imp-wig:c:multisourceext}
Barak, B., Impagliazzo, R., Wigderson, A.: Extracting randomness using few
  independent sources. In: Proceedings of the 36th ACM Symposium on Theory of
  Computing. pp. 384--393 (2004)

\bibitem{bkssw:c:multisourceextract}
Barak, B., Kindler, G., Shaltiel, R., Sudakov, B., Wigderson, A.: Simulating
  independence: new constructions of condensers, ramsey graphs, dispersers, and
  extractors. In: Proceedings of the 37th ACM Symposium on Theory of Computing.
  pp. 1--10 (2005)

\bibitem{bou:j:multiextract}
Bourgain, J.: More on the sum-product phenomenon in prime fields and its
  applications. International Journal of Number Theory  1,  1--32 (2005)

\bibitem{cal:b:infandrand}
Calude, C.: Information and Randomness: An Algorithmic Perspective.
  Springer-Verlag (2002), 2nd edition, 1st edition in 1994

\bibitem{cha-lyu-ti-shen:j:kindep}
Chang, C., Lyuu, Y., Ti, Y., Shen, A.: Sets of $k$-independent sets.
  International Journal of Foundations of Computer Science  (2009), to appear.

\bibitem{dow-hir:b:algrandom}
Downey, R., Hirschfeldt, D.: Algorithmic randomness and complexity. {S}pringer
  {V}erlag (2010)

\bibitem{fhpvw:c:extractKol}
Fortnow, L., Hitchcock, J., Pavan, A., Vinodchandran, N., Wang, F.: Extracting
  {K}olmogorov complexity with applications to dimension zero-one laws. In:
  Proceedings of the 33rd International Colloquium on Automata, Languages, and
  Programming. pp. 335--345. Springer-Verlag {\it Lecture Notes in Computer
  Science \#4051}, Berlin (2006)

\bibitem{hit-pav-vin:t:Kolmextraction}
Hitchcock, J., Pavan, A., Vinodchandran, N.: Kolmogorov complexity in
  randomness extraction. Electronic Colloquium on Computational Complexity
  (ECCC) (09-071) (2009)

\bibitem{juk:b:extremcombinat}
Jukna, S.: Extremal Combinatorics. Springer Verlag (2001)

\bibitem{li-vit:b:kolmbook}
Li, M., Vitanyi, P.: An introduction to Kolmogorov complexity and its
  applications. Springer-Verlag (2008), 3rd edition. 1st edition in 1993.

\bibitem{rao:c:multisourceextract}
Rao, A.: Extractors for a constant number of polynomially small min-entropy
  independent sources. In: Proceedings of the 38th ACM Symposium on Theory of
  Computing. pp. 497--506 (2006)

\bibitem{raz:c:multiextract}
Raz, R.: Extractors with weak random seeds. In: Gabow, H.N., Fagin, R. (eds.)
  STOC. pp. 11--20. ACM (2005)

\bibitem{she:t:kolmnotes}
Shen, A.: Algorithmic information theory and {K}olmogorov complexity. Tech.
  Rep. 2000-034, Uppsala Universitet (December 2000)

\bibitem{tur:j:indepset}
Tur\'{a}n, P.: On an extremal problem in graph theory. Math.Fiz.Lapok  48,
  436--452 (1941), in {H}ungarian

\bibitem{wei:t:turan}
Wei, V.: A lower bound on the stability number of a simple graph. Tech. Rep.
  81-11217-9, Bell Laboratories (1981)

\bibitem{zim:c:kolmlimindep}
Zimand, M.: Extracting the {K}olmogorov complexity of strings and sequences
  from sources with limited independence. In: Proceedings 26th STACS, Freiburg,
  Germany (February 26--29 2009)

\bibitem{zim:c:impossibamplific}
Zimand, M.: Impossibility of independence amplification in {K}olmogorov
  complexity theory. In: MFCS (2010)

\bibitem{zim:c:csr}
Zimand, M.: Two sources are better than one for increasing the {K}olmogorov
  complexity of infinite sequences. In: Hirsch, E.A., Razborov, A.A., Semenov,
  A.L., Slissenko, A. (eds.) CSR. Lecture Notes in Computer Science, vol. 5010,
  pp. 326--338. Springer (2008)

\bibitem{zim:c:genindepstringsCiE09}
Zimand, M.: On generating independent random strings. In: Ambos-Spies, K.,
  L{\"o}we, B., Merkle, W. (eds.) CiE. Lecture Notes in Computer Science, vol.
  5635, pp. 499--508. Springer (2009)

\bibitem{zvo-lev:j:kol}
Zvonkin, A., Levin, L.: The complexity of finite objects and the development of
  the concepts of information and randomness by means of the theory of
  algorithms. Russian Mathematical Surveys  25(6),  83--124 (1970)

\end{thebibliography}


\newpage

\appendix
\section{}
\medskip

{\bf Symmetry of Information Theorem}
\begin{theorem}
For any two strings $x$ and $y$, 
\begin{itemize}
\item[(a)] $C(xy) \leq C(y) + C(x \mid y) + 2 \log C(y) +O(1)$.
\item[(b)] $C(xy) \geq C(x) + C(y \mid x) - 2 \log C(xy) - 4 \log \log C(xy) - O(1)$.
\item[(c)] If $|x| = |y| = n$, $C(y) - C(y\mid x) \geq C(x) - C(x \mid y) - 5 \log n$
\end{itemize}
\end{theorem}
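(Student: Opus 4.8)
The plan is to prove (a) and (b) directly, by the classical Zvonkin--Levin method, and then deduce (c) by arithmetic. \textbf{Part (a)} is a one-shot construction: take a shortest program $a$ for $y$ (so $|a| = C(y)$) and a shortest program $b$ producing $x$ with $y$ on the conditioning tape (so $|b| = C(x\mid y)$), and output the string $\overline{bin(|a|)}\,01\,a\,b$, whose length is $C(y) + C(x\mid y) + 2\log C(y) + O(1)$. A suitable machine reads the doubly-self-delimited prefix to recover $|a|$, strips off exactly $|a|$ bits and runs $U$ on them to obtain $y$, runs $U$ on the remaining bits with $y$ as conditioning to obtain $x$, and prints $xy$; universality gives the inequality.

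\textbf{Part (b)} is the heart of the matter, and I would run the ``heavy-column'' counting argument. Put $p = |x|$, $q = |y|$, $m = C(xy)$, and for a string $u$ let $A_u = \{v : C(uv) \leq m\}$; then $|A_u| \leq 2^{m+1}$ for every $u$, and for $u$ ranging over $\{0,1\}^p$ the sets $\{uv : v \in A_u\}$ are pairwise disjoint, so $\sum_{|u| = p} |A_u| \leq 2^{m+1}$. Choose $e$ with $2^e \leq |A_x| < 2^{e+1}$ (possible since $y \in A_x$). First, $A_x$ is computably enumerable from $x$ together with $m$, and $y$ has index $< 2^{e+1}$ in that enumeration, so $C(y\mid x) \leq e + O(\log C(xy))$. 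Second, the set $B = \{u \in \{0,1\}^p : |A_u| \geq 2^e\}$ satisfies $|B| \leq 2^{m-e+1}$ and is computably enumerable from $m, e$ (and $p$), with $x \in B$, so $C(x) \leq (m - e) + O(\log C(xy))$. Adding the two inequalities, the $\pm e$ cancels and one is left with $C(x) + C(y\mid x) \leq C(xy) + O(\log C(xy))$, which is (b). \textbf{The main obstacle is the bookkeeping of the self-delimiting encodings} of the parameters $m$ and $e$ (and of the string lengths): the only conceptual ingredient is the disjoint-columns count $|B| \leq 2^{m-e+1}$, but pinning the error down to exactly $2\log C(xy) + 4\log\log C(xy)$ requires committing to doubly-self-delimiting codes for those integers and adding up the $\log$ and $\log\log$ contributions carefully.

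\textbf{Part (c)} follows by feeding (a) and (b) into each other. From (b), $C(y\mid x) \leq C(xy) - C(x) + 2\log C(xy) + 4\log\log C(xy) + O(1)$; from (a), $C(xy) \leq C(y) + C(x\mid y) + 2\log C(y) + O(1)$. Subtracting the first from $C(y)$ and substituting the second,
\[
C(y) - C(y\mid x) \geq C(x) - C(x\mid y) - 2\log C(y) - 2\log C(xy) - 4\log\log C(xy) - O(1).
\]
When $|x| = |y| = n$ we have $C(y) \leq n + O(1)$ and $C(xy) \leq 2n + O(1)$, so $2\log C(y) + 2\log C(xy) + 4\log\log C(xy) + O(1) \leq 4\log n + 4\log\log n + O(1)$, which is below $5\log n$ once $n$ is large; this gives (c). This last step is routine collection of logarithmic terms; no genuine difficulty arises.
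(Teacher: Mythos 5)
Your proposal follows the same route as the paper's proof: (a) by direct concatenation of self-delimiting programs, (b) by the heavy-column counting argument with the sets $A_u$ and $B = \{u : |A_u| \geq 2^e\}$, and (c) by arithmetic from (a) and (b). The skeleton is correct, and your disjoint-columns bound $|B| \leq 2^{m-e+1}$ is exactly the paper's $|B| < |A|/2^e$.

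One concrete caveat about the constants in (b), which you flag as ``the main obstacle'' but resolve in a way that does not quite work. Your plan is to make $B$ enumerable ``from $m, e$ (and $p$)'' and to commit to doubly-self-delimiting codes for those integers. Encoding both $m$ and $e$ in the description of $x$ costs roughly $\log m + 2\log\log m$ twice there, plus once more for $m$ in the description of $y$ given $x$, for a total error of about $3\log C(xy) + 6\log\log C(xy)$ rather than the claimed $2\log C(xy) + 4\log\log C(xy)$. The paper avoids paying for $e$ at all: the rank of $x$ in $B$ is written on \emph{exactly} $t - e + 1$ bits, so $e$ is recovered from the length of that field once $t$ is known, and each of the two descriptions then carries only the single parameter $t$, giving $2\log t + 4\log\log t$ total. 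This is not a cosmetic point, because part (c) needs the accumulated error to stay below $5\log n$: with your accounting one gets $2\log C(y) + 3\log C(xy) + 6\log\log C(xy) + O(1) = 5\log n + 6\log\log n + O(1)$, which overshoots, whereas the paper's constants give $4\log n + 4\log\log n + O(1) \leq 5\log n$ for large $n$, exactly as in your own computation for (c).
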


Proof (sketch): (a) is easy and (c) follows immediately from (a) and (b). We prove (b). Let $C(xy) = t$, $A = \{ (u,v) \mid C(uv) \leq t \}$, $A_u = \{ v \mid C(uv) \leq t \}$. Note that $|A| < 2^{t+1}$. Let
$e = \lfloor \log |A_x| \rfloor$. Let $B = \{u \mid |A_u| \geq 2^{e}\}$. Note that $x \in B$ and $|B| < |A|/2^e < 2^{t-e+1}$.

FACT: $x$ can be described by: $t$, rank in $B$ (which is written on exactly $t-e+1$ bits so that $e$ can be also reconstructed), $O(1)$ bits. So $C(x) \leq (t-e+1) + \log t + 2 \log \log t + O(1)$.

FACT: $y$, given $x$, can be described by: $t$, rank in $A_x$, $O(1)$ bits. So, $C(y \mid x) \leq e + \log t + 2 \log \log t +O(1)$.

Combining the last two: $C(x) \leq t - (C(y \mid x) -\log t - 2 \log \log t - O(1)) + \log t + 2 \log \log t + (1)$ 
$= C(xy) - C(y \mid x) + 2 \log t + 4 \log \log t +O(1)$
$=C(xy) - C(y \mid x) + 2 \log C(xy) + 4 \log \log C(xy) + O(1)$.

\end{document}